    \let\MYcaption\@makecaption
    \let\@makecaption\MYcaption
\tikzset{>=latex}
\DeclareMathOperator*{\argmin}{arg\,min}
\newcommand{\set}[1]{\mathcal{#1}}
\newtheorem{theorem}{Theorem}
\newtheorem{lemma}{Lemma}
\newtheorem{remark}{Remark}
\newtheorem{definition}{Definition}
\newtheorem{assumption}{Assumption}
\newtheorem{corollary}{Corollary}
\begin{document}
\title{On the Cost of Consecutive Estimation Error: Significance-Aware Non-linear Aging}
\author{Jiping~Luo,~Nikolaos~Pappas,~\IEEEmembership{Senior~Member,~IEEE}
\thanks{This work has been supported in part by the Swedish Research Council (VR), ELLIIT, the Graduate School in Computer Science (CUGS), the European Union (ETHER, 101096526, ROBUST-6G, 101139068, and 6G-LEADER, 101192080), and the European Union's Horizon Europe research and innovation programme under the Marie Skłodowska-Curie Grant Agreement No 101131481 (SOVEREIGN). This paper was presented in part at the WiOpt 2025~\cite{jiping2025WiOpt}.}
\thanks{The authors are with the Department of Computer and Information Science, Link\"oping University, Link\"oping 58183, Sweden (e-mail: jiping.luo@liu.se; nikolaos.pappas@liu.se).}
}
\maketitle
\begin{abstract}
    This paper considers the semantics-aware remote state estimation of an asymmetric Markov chain with \emph{prioritized} states. Due to resource constraints, the sensor needs to trade off estimation quality against communication cost. The aim is to exploit the \emph{significance} of information through the history of system realizations to determine the optimal timing of transmission, thereby reducing the amount of uninformative data transmitted in the network. To this end, we introduce a new metric, the \emph{significance-aware Age of Consecutive Error} (AoCE), that captures three semantic attributes: the \emph{significance of estimation error}, the \emph{cost of consecutive error} (or \emph{lasting impact}, for short), and the \emph{urgency of lasting impact}. Different costs and non-linear age functions are assigned to different estimation errors to account for their relative importance to system performance. We identify the optimal transmission problem as a countably infinite state Markov decision process (MDP) with unbounded costs. We first give sufficient conditions on the age functions, source pattern, and channel reliability so that an optimal policy exists to have bounded average costs. We show that the optimal policy exhibits a \emph{switching structure}. That is, the sensor triggers a transmission only when the system has been trapped in an error for a certain number of consecutive time slots. We also provide sufficient conditions under which the switching policy degenerates into a simple \emph{threshold policy}, i.e., featuring identical thresholds for all estimation errors. Furthermore, we exploit the structural results and develop a \emph{structured policy iteration} (SPI) algorithm that considerably reduces computation overhead. Numerical results show that the optimal policy outperforms the classic rule-, distortion- and age-based policies. An important takeaway is that \emph{the more semantic attributes we utilize, the fewer transmissions are needed.}
\end{abstract}
\begin{IEEEkeywords}
Remote estimation, semantic communications, significance and value of information, Markov decision process.
\end{IEEEkeywords}

\section{Introduction}
\IEEEPARstart{R}{emote} state estimation is a fundamental and significant problem in networked control systems (NCSs)\cite{Brockett-TAC-1997, Hespanha-ProcIEEE-2007, Schenato-ProcIEEE-2007, WNCSSurvey-2018}. Such systems often involve battery-powered devices sending local observations to remote ends over bandwidth-limited networks. Therefore, the transmitter can only transmit intermittently to trade off estimation quality against resource utilization~\cite{LingShi-TAC-2013, Nayyar-TAC-2013, Dey-TAC-2017, Aditya-TAC-2017, SunYin-TIT-2020, luo2025TCOM}. An important question arises: How should the transmitter determine which measurements are valuable?

In classical remote estimation, estimation quality is measured by distortion metrics such as Hamming distortion or mean square error, where a measurement is valuable if it contributes to a more \emph{accurate} estimate at the receiver\cite{LingShi-TAC-2013, Nayyar-TAC-2013, Dey-TAC-2017, Aditya-TAC-2017}. The underlying assumption is that \emph{all source states convey equally important information, and the cost of estimation error depends solely on the discrepancy between the source and the reconstructed signal.} However, this assumption deserves a careful re-examination in many applications. For example, in manufacturing systems, a plant may either reside in a normal state or shift to an alarm state upon an abnormal change in the operation point\cite{VVV-JSAIT-2021, Vikram-TIT-2013, VVV-TIT-2010}. In this context, the alarm state is of greater importance, and consequently, missed alarms incur significantly higher costs than false alarms. Similarly, connected autonomous vehicles demand more precise status information in critical situations (e.g., off-track or dense traffic). This motivates us to revisit the definition of ``estimation quality" and incorporate \emph{data significance} into system design. 

\begin{figure}[t]
    \centering
    \includegraphics[width=\linewidth]{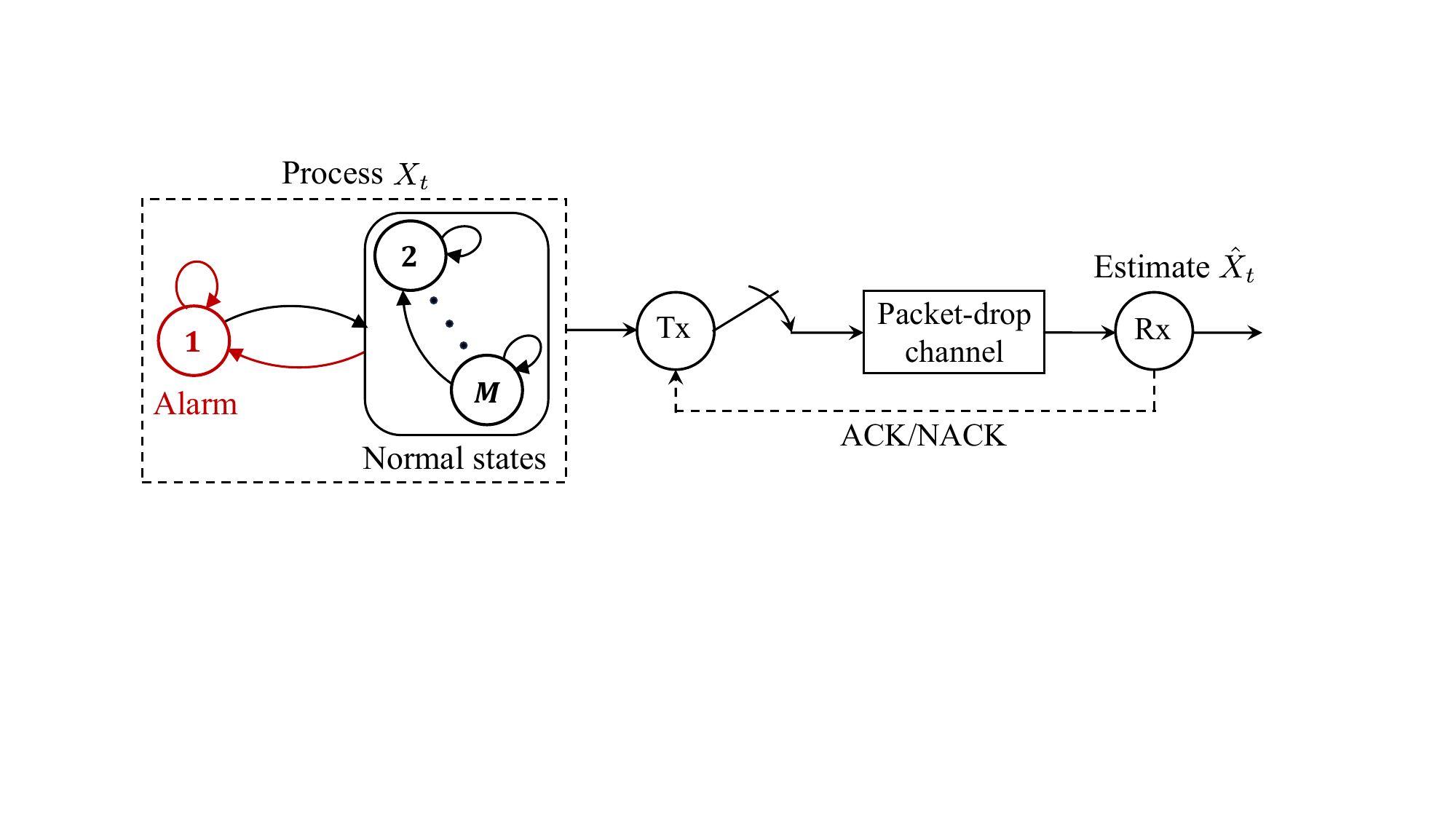}
    \caption{Remote estimation of a Markov source with prioritized states.}
    \label{fig:system-model}
\end{figure}

This paper considers the remote estimation of a finite-state Markov chain $\{X_t\}_{t\geq 1}$ (see Fig.~\ref{fig:system-model}). The sensor observes the chain and decides when to send measurements to the receiver. The receiver is tasked with constructing the estimate process $\{\hat{X}_t\}_{t\geq 1}$ based on the received measurements. Each state, labeled as $1, 2, \ldots, M$, can either represent a quantized level of a physical process or an abstract status of the system\footnote{In manufacturing systems, the normal and alarm states correspond to the pre- and post-change distributions of the underlying process, respectively.}. Given that some states convey more important information than others, we utilize three semantic\footnote{An introduction to semantics-aware communications can be found in~\cite{Marios}.} attributes to capture data significance: the \emph{significance of estimation error}, the \emph{cost of consecutive error} (or \emph{lasting impact}, for short), and the \emph{urgency of lasting impact}. The first attribute is represented by the content-aware distortion (see, e.g., \cite{Nikos-CAE-2021, Mehrdad-TCOM-2024, Mehrdad-JCN-2023, Zakeri-Asilomar-2023, luo2025TCOM}), where the cost of estimation error depends not only on the physical discrepancy but also on the contextual relevance and
potential control risks to system performance. The second attribute is motivated by the observation that the longer an error persists, the more severe its consequences can become~\cite{George-GCW-2019, Maatouk-TON-AoII-2020, Mehrdad-JCN-2023, luo2024MobiHoc}. Moreover, the urgency of lasting impact depends on the significance of estimation error. For example, autonomous vehicles might tolerate a moderate number of consecutive status errors under relatively safe conditions. However, inaccurate information about urgent states, even for a few seconds, can lead to wrong operations or even crashes. Existing metrics cannot adequately capture these semantic attributes. In this paper, we propose using a \emph{non-linear age function} $g_{X_t, \hat{X}_t}(\Delta_t)$ to model the cost of being in estimation error $(X_t, \hat{X}_t)$ for $\Delta_t$ consecutive time slots up until time $t$. Recently, information aging has received significant attention in remote estimation systems. However, most existing studies have been devoted to content-agnostic age metrics. The design of efficient policies for optimizing significance-aware non-linear aging remains largely unexplored. 

The main contributions of this work are as follows:
\begin{itemize}
    \item We introduce a new metric, the \emph{significance-aware Age of Consecutive Error} (AoCE), which accounts for both the significance of the current estimation error and the history-dependent cost of consecutive error. In essence, this metric assigns different costs and age functions to different estimation errors, offering the flexibility to handle the lasting impact of each error separately. For example, one might impose higher costs and exponential age functions on missed alarms while applying lower costs and logarithmic age functions to false alarms. 
    \item The remote estimation problem is formulated as a \emph{countably infinite} state Markov decision process (MDP) with \emph{unbounded} costs. We give sufficient conditions on the source pattern, age functions, and channel reliability that yield a deterministic optimal policy for this MDP. We prove that the optimal policy exhibits a \emph{switching structure}, as depicted in Fig.~\ref{fig:policy-structure}. That is, the sensor triggers a transmission only when the age of the error $(X_t, \hat{X}_t) = (i,j)$ exceeds a fixed threshold $\tau^*_{i,j}$. We also give conditions under which the optimal policy degenerates into a simple \emph{threshold policy}; that is, it has identical thresholds for all errors. 
    \item Our switching policy is significant in several aspects. It circumvents the ``curse of memory" and the ``curse of dimensionality" of the MDP. One only needs to compute a small number of threshold values offline and store them in the sensor memory instead of solving a high-dimensional dynamic programming recursion and saving the results for infinitely many states. Moreover, it answers the fundamental question of ``what and when to transmit". According to \cite{luo2025TCOM}, distortion-optimal policies only tell ``whether to transmit when a certain error occurs". That is, the optimal threshold for the error $(X_t, \hat{X}_t) = (i,j)$ is either $\tau^*_{i,j} = 1$ (i.e., always transmit) or $\tau^*_{i,j} \rightarrow \infty$ (i.e., never transmit). By incorporating error holding time as the third dimension in the decision-making process, our approach further determines the \emph{optimal timing} to initiate a transmission, allowing for transmissions to occur after several consecutive errors, i.e., $\tau_{i,j}^*\geq 1$. 
    \item For numerical tractability, we propose a state-space truncation method and show the asymptotic optimality of the truncated MDP. We exploit these findings to develop a \emph{structured policy iteration} (SPI) algorithm to compute the switching policy with reduced computation overhead. Our numerical results show that the switching policy can be much better than the classic rule-, distortion-, and age-based policies. This highlights that the significance-aware AoCE offers more informed decisions and extends the current understanding of distortion and information aging.
\end{itemize}

\begin{figure}
    \centering
    \includegraphics[width=0.95\linewidth]{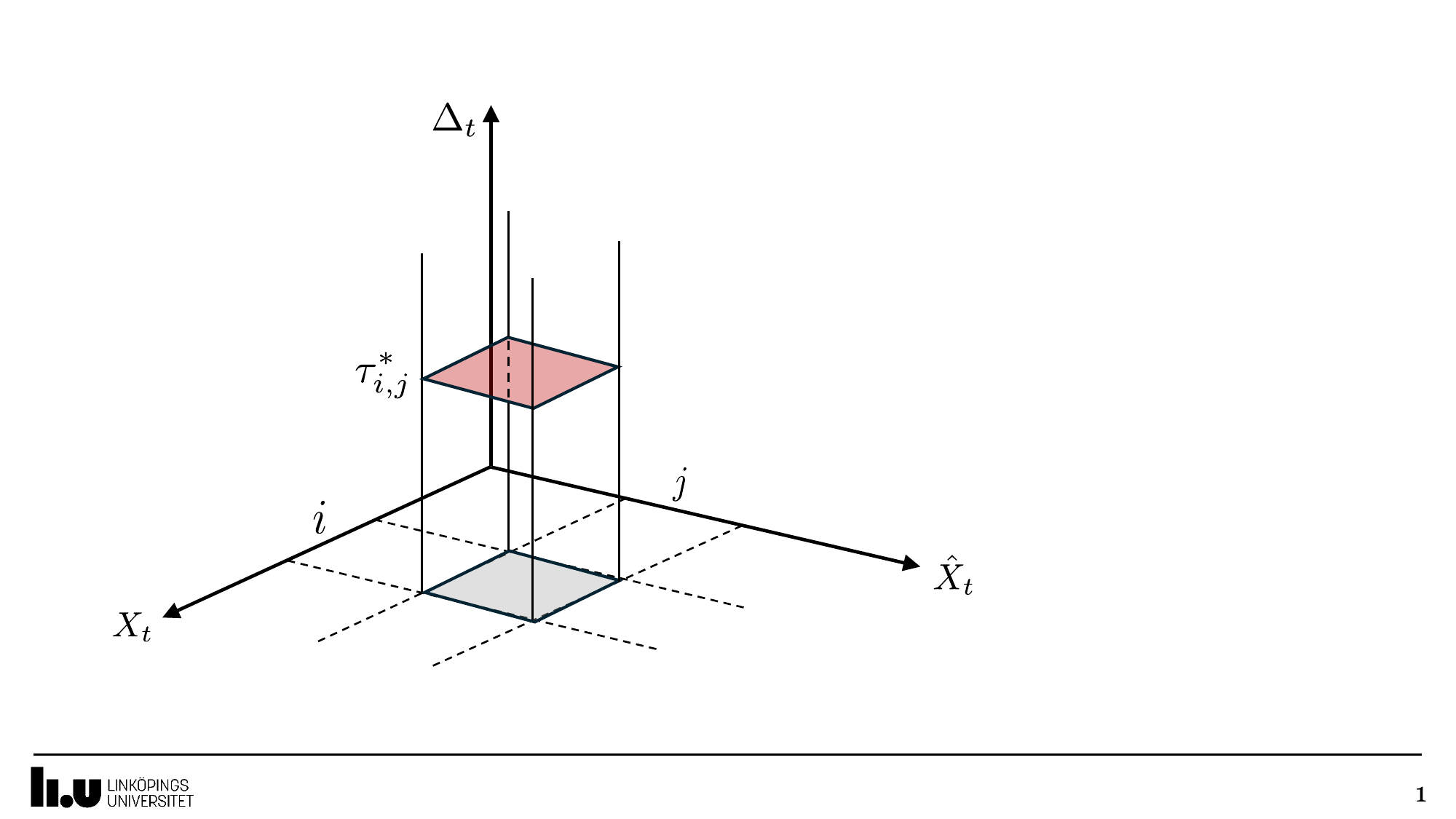}
    \caption{Switching structure of the optimal policy.}
    \label{fig:policy-structure}
\end{figure}

The rest of this paper is organized as follows. Related work is summarized in Section~\ref{sec:related-work}. Section~\ref{sec:system-model} introduces the system model and formulates the optimal transmission problem. Section~\ref{sec:main-results} presents our main results, including the structure of the optimal policy, the asymptotic optimality of the truncated MDP, and the SPI algorithm. The numerical results and the conclusion are provided in Section~\ref{sec:numerical-results} and Section~~\ref{sec:conclusion}.

\section{Related Work}\label{sec:related-work}
In the remote estimation literature, the primary objective has been to minimize distortion over constraints on available resources, such as channel bandwidth or energy budget. Since distortion depends on the physical discrepancy between the original and reconstructed signals, the optimal transmission and estimation policies must be \emph{signal-aware} and derived by accounting for the source's evolution pattern. Numerous studies have been devoted to \emph{linear Gaussian systems}. Remote estimation of a scalar Gaussian source with communication costs was studied in\cite{Lipsa-TAC-2011-scalar-source}, where the authors proved that a threshold transmission policy (i.e., the sensor transmits whenever the current error exceeds a threshold) and a Kalman-like estimator are jointly optimal. These results were further extended to systems with multidimensional Gaussian sources and energy-harvesting sensors\cite{Nayyar-TAC-2013}, hard constraints on transmission frequency\cite{Aditya-TAC-2017}, unreliable channels with adaptive noise\cite{gao2018optimal} and packet drops\cite{Aditya-TAC-2020}, to name a few. 

Another line of research focuses on the use of smart sensors (e.g., with Kalman filters) to pre-estimate source states and then send the estimated states, rather than raw measurements, to the receiver~\cite{LingShi-TAC-2013, Dey-TAC-2017, LingShi-2018-Automatica, WuShuang-TCNS-2020, Wanchun-TAC-2022}. Besides tractability, this approach establishes a connection between distortion and the Age of Information\footnote{Let $U_t$ be the generation time of the newest measurement received at the receiver by time $t$. The AoI is defined as the time elapsed since the latest measurement was generated\cite{Roy2012INFOCOM, kosta2017age}, i.e., $t - U_t$.} (AoI). It has been shown that the error covariance is a monotonic non-decreasing function of AoI\footnote{The monotonicity can be generalized to $p$-th order autoregressive $\textrm{AR}(p)$ linear processes, provided that the sensor sends a sequence of measurements no shorter than the process order $p$\cite{SunYin-INFOCOM-2024}.}. Consequently, AoI serves as a sufficient statistic for decision-making, and the optimal policy initiates a transmission whenever the age exceeds a threshold. These results suggest that, in such systems, measurements are more valuable when they are \emph{fresh}. However, this may not always be the case, as AoI ignores the source pattern and is, therefore, \emph{signal-agnostic}. For instance, in the remote estimation of Wiener and Ornstein-Uhlenbeck processes\cite{SunYin-TIT-2020, SunYin-TON-2021}, the estimation error achieved by the distortion-optimal policy can be much smaller than that of the age-optimal policy. 

Remote estimation of discrete-state \emph{Markov chains} has gained significant interest in recent years\cite{Aditya-TAC-2020, LingShi-TAC-HMM-2017, LingShi-TAC-2023-Neuromorphic-Camera}. A salient feature of Markov processes is that they evolve in a probabilistic manner; consequently, the estimation error does not necessarily evolve monotonically with AoI. Another reason traditional distortion and AoI metrics become inefficient in Markovian systems is that the states often convey richer information beyond simple physical amplitudes \cite{luo2025TCOM, luo2024MobiHoc}. Such motivated, the concept of semantics-aware estimation and a series of content-aware metrics that go beyond information accuracy and freshness have been proposed. Content-aware distortion was first introduced in\cite{Nikos-CAE-2021} to represent the significance of different estimation errors. Performance analysis and comparison of different policies were studied in\cite{Mehrdad-JCN-2023, Mehrdad-TCOM-2024, Zakeri-Asilomar-2023, luo2025TCOM}. Structural results of the optimal policy in resource-constrained systems were established in \cite{luo2025TCOM}. Various age metrics have been proposed to address the shortcomings of AoI. Content-aware AoI\cite{George-GCW-2019} and the Uncertainty of Information (UoI)\cite{gongpu2022UoI} reveal that the information quality depends on its content and evolves at different rates. The Version Age of Information (VAoI) tracks the number of content changes and is more relevant than AoI in Markovian systems\cite{yates2021versionage, sennur2022versionage, salimnejad2024version}. The Age of Incorrect Information (AoII) is a signal-aware metric that counts only the time elapsed since the system was last synced\cite{Maatouk-TON-AoII-2020, ChenYutao-TON-AoII-2024, Nail-INFOCOM-2024, Nail-ISIT-2024}. The optimality of threshold policies for symmetric Markov chains was established in \cite{Maatouk-TON-AoII-2020}. The authors of~\cite{Nail-ISIT-2024} extended these results to general continuous-time Markov chains, showing that the AoII-optimal policy has a switching structure that depends on the instantaneous estimation error. However, these age metrics treat all estimation errors equally, leading to inadequate transmissions in alarm states but excessive transmissions in normal states.

The closest study to this paper is our previous work in~\cite{luo2024MobiHoc}, where we introduced the Age of Missed Alarm (AoMA) and the Age of False Alarm (AoFA) to account for the lasting impact of a binary Markov chain. This paper generalizes~\cite{luo2024MobiHoc} in the following aspects: i) In this paper, we introduce a change-aware age metric, termed AoCE, that resets upon error variations, based on which we extend the results to general finite-state Markov chains. ii) While~\cite{luo2024MobiHoc} showed that an optimal policy always exists for linear age metrics, this does not hold for non-linear age functions. In this paper, we give sufficient conditions on the age functions, source pattern, and channel reliability so that an optimal policy exists to have bounded average costs. iii) The proofs in~\cite{luo2024MobiHoc} relied on analytical expressions of the switching policy, which do not apply to multi-state sources and non-linear age functions. We can generalize the theoretical results and greatly simplify the proofs by adopting new proof techniques. iv) In~\cite{luo2024MobiHoc}, the optimal policy was obtained by exhaustively searching for two optimal thresholds using analytical results. In contrast, this paper presents an SPI algorithm that exploits the structural properties to reduce computation overhead.

\section{System Model and Problem Formulation}\label{sec:system-model}
\subsection{Remote Estimation Model}
Consider the remote state estimation model illustrated in Fig.~\ref{fig:system-model}. The system comprises four main components: an information source, a local sensor (transmitter), a remotely placed estimator (receiver), and a wireless channel. 

The stochastic process considered is a finite-state, homogeneous, discrete-time Markov chain (DTMC) $\{X_t\}_{t\geq 1}$ defined on the finite state space
\begin{align}
    \set{X} = \{1, 2, \ldots, M\}.
\end{align}
Here, state $1$ is labeled as the ``alarm" state. We note that the results in this paper extend automatically to models with multiple alarm states. For later reference, we distinguish between the following types of estimation errors:
\begin{itemize}
    \item Missed alarms occur when the receiver falsely announces a normal state while the source is actually in the alarm state, i.e., $X_t = 1, \hat{X}_t\neq 1$. Timely detection of abnormalities is crucial for decision-making and system maintenance\cite{VVV-JSAIT-2021}.
   \item False alarms refer to erroneously raising an alarm at the receiver when the source is in a normal state, i.e., $X_t\neq 1, \hat{X}_t = 1$. Although less critical than missed alarms, false alarms can lead to unnecessary expenditure on checking the system thus wasting resources.
   \item Other normal errors are considered indistinguishable and are not of primary interest.
\end{itemize}

Let $Q$ denote the state transition probability matrix, where
\begin{align}
    Q = (Q_{i,j}, i,j\in\set{X}), ~
    Q_{i,j} = \Pr[X_{t+1}=j|X_t=i].
\end{align}
To avoid pathological cases, we assume $Q$ is irreducible. Let $\set{X}_\textrm{ap}$ and $\set{X}_\textrm{p}$ denote the sets of states with and without self-transitions, respectively, where
\begin{align}
    \set{X}_\textrm{ap} = \{i\in\set{X}:Q_{i,i}>0\}, ~\set{X}_\textrm{p} = \set{X}-\set{X}_\textrm{ap}.
\end{align}
If there is at least one self-transition, i.e., $\set{X}_\textrm{ap} \neq \emptyset$, then $Q$ is aperiodic. Unless otherwise stated, we assume $\set{X}_\textrm{ap} \neq \emptyset$.

The sensor sequentially observes the source state $X_t$ and decides at every decision epoch (i.e., the beginning of time slot $t$) whether or not to transmit a new measurement. Let $A_t\in \{0, 1\}$ denote the decision variable, where $A_t = 1$ means transmission while $A_t = 0$ means no transmission. We consider an error-prone channel with \emph{i.i.d.} packet drops. Let $H_t\in\{0, 1\}$ denote the packet dropout process, which is an \emph{i.i.d.} Bernoulli process satisfying
\begin{align}
    \Pr[H_t = 1] = p_s, ~
    \Pr[H_t = 0] = 1 - p_s = p_f.
\end{align}
Here, $H_t =1$ means the packet arrives at the destination by the end of time slot $t$ (i.e., the decision epoch $t+1$ in continuous time), whereas $H_t = 0$ indicates deep fading channel conditions, resulting in transmission failure.

Upon successful reception, the receiver updates its estimate using the latest received measurement\footnote{This is a common assumption in the literature\cite{Nikos-CAE-2021, Maatouk-TON-AoII-2020, luo2025TCOM}. The optimal estimate depends on the source statistics, transmission policy, and history of received measurements\cite{krishnamurthy2016POMDP}. However, this is out of the scope of this work.}, i.e., $\hat{X}_{t+1} = X_t$, and sends an acknowledgment (ACK) packet to the sensor. Otherwise, a negative ACK (NACK) is feedback, and the remote estimate remains unchanged, i.e., $\hat{X}_{t+1} = \hat{X}_t$. We assume that ACK/NACK packets are delivered instantaneously and error-free. Therefore, the sensor knows precisely the remote estimate $\hat{X}_t$ at every decision epoch. The information available at the sensor up to time $t$ is 
\begin{align}
    I_t = (X_{1:t}, \hat{X}_{1:t}, A_{1:t-1}).
\end{align}
At decision epoch $t$, a decision $A_t$ is taken according to a \emph{transmission rule} $\pi_t$, i.e.,
\begin{align}
    A_t = \pi_t(I_t) = \pi_t(X_{1:t}, \hat{X}_{1:t}, A_{1:t-1}).
\end{align}
A \emph{transmission policy} is a sequence of transmission rules, i.e., $\pi=(\pi_1, \pi_2, \ldots)$. We call a policy \emph{stationary} if it employs the same rule at every epoch $t$. A policy is \emph{deterministic} if, given the history $I_t$, it selects an action with certainty. A \emph{randomized} policy specifies a probability distribution on the action space. 

\subsection{Significance of Estimation Error} \label{sec:significance of estimation errors}
In classical remote state estimation, \emph{whether a measurement is discarded or transmitted does not depend on the significance of the measurement.} A widely used performance metric for Markov chains is the Hamming distortion~\cite{Tamer-HammingDistortion-2005, Nail-BinaryFreshness-2024}, i.e.,
\begin{align}
    d(X_t, \hat{X}_t) = \mathds{1}\{X_t\neq \hat{X}_t\},\label{eq:classical-distortion}
\end{align}
where $\mathds{1}\{\cdot\}$ is the indicator function.

Recall that in our problem, the alarm state (labeled as state $1$) is of greater interest. Intuitively, missed alarms typically incur higher costs than other estimation errors. Therefore, we employ a \emph{content-aware distortion} metric that assigns different costs to different estimation errors\cite{Nikos-CAE-2021, luo2025TCOM}, defined as
\begin{align}
    \bar{d}(X_t, \hat{X}_t) \triangleq \begin{cases}
        D_{i,j}, &\text{if}~(X_t, \hat{X}_t)=(i,j),i\neq j,\\
        0, &\text{otherwise},
    \end{cases}\label{eq:new-distortion}
\end{align}
where $D_{i,j}>0$ represents the significance of error $(i, j)$.

A notable shortcoming of distortion lies in its \emph{history-independence}. Although the source evolution is Markovian, \emph{the value of information carried by the measurement depends on the history of past observations and decisions}. For instance, not only does the instantaneous estimation error matter but also how long the system has been trapped in this error, i.e., the cost of consecutive error (or \emph{lasting impact}, for short)\cite{luo2024MobiHoc}. 

\begin{definition}
In this paper, we introduce a new age process, termed Age of Consecutive Error (AoCE), to capture this history-dependent attribute, defined as
\begin{align}
    \Delta_t \triangleq 
    \begin{cases}
        \Delta_{t-1}+ 1, &\text{if}~X_{t}\neq\hat{X}_{t}, (X_{t},\hat{X}_t)=(X_{t-1},\hat{X}_{t-1}),\\
        1, &\text{if}~X_t\neq\hat{X}_t,(X_{t},\hat{X}_t)\neq(X_{t-1},\hat{X}_{t-1}),\\
        0, &\text{if}~X_t=\hat{X}_t.
    \end{cases}\label{eq:aoce}
\end{align}
Fig.~\ref{fig:age-figures} compares AoCE with typical distortion and age metrics.
\end{definition} 

\begin{figure*}[t]
    \centering
    \begin{subfigure}{0.32\linewidth}
        \centering
        \scalebox{0.75}{\begin{tikzpicture}[scale=1.0]
\draw[->] (0,0) -- (6.8,0) node[anchor=north] {$t$};
\draw[->] (0,0) -- (0,3.5);
\draw (0, 3.7) node {$d_t$};
\draw (-0.2, 0.5) node {$1$};

\fill (0,0)  circle[radius=1.5pt];
\fill (1,0)  circle[radius=1.5pt];
\fill (2,0)  circle[radius=1.5pt];
\fill (3,0)  circle[radius=1.5pt];
\fill (4,0)  circle[radius=1.5pt];
\fill (5,0)  circle[radius=1.5pt];
\fill (6,0)  circle[radius=1.5pt];

\draw	(0,0) node[anchor=north] {$t_0$}
        (1,0) node[anchor=north] {$t_1$}
		(2,0) node[anchor=north] {$t_2$}
		(3,0) node[anchor=north] {$t_3$}
		(4,0) node[anchor=north] {$t_4$}
		(5,0) node[anchor=north] {$t_5$}
        (6,0) node[anchor=north] {$t_{6}$};

\draw[dotted] (1,0) -- (1,3);
\draw[dotted] (2,0) -- (2,3);
\draw[dotted] (3,0) -- (3,3);
\draw[dotted] (4,0) -- (4,3);
\draw[dotted] (5,0) -- (5,3);
\draw[dotted] (6,0) -- (6,3);
              (6.55,-0.15) -- (6.55,0.15);

\draw	(0.5,3) node{{\scriptsize $X=1$}}
        (0.5,2.7) node{{\scriptsize $\hat{X}=2$}}
		(1.5,2.99) node{{\scriptsize $1$}}
        (1.5,2.65) node{{\scriptsize $2$}}
        (2.5,2.99) node{{\scriptsize $1$}}
        (2.5,2.65) node{{\scriptsize $2$}}
        (3.5,2.99) node{{\scriptsize $3$}}
        (3.5,2.65) node{{\scriptsize $1$}}
        (4.5,2.99) node{{\scriptsize $3$}}
        (4.5,2.65) node{{\scriptsize $1$}}
        (5.5,2.99) node{{\scriptsize $1$}}
        (5.5,2.65) node{{\scriptsize $1$}}
        (6.5,2.99) node{{\scriptsize $1$}}
        (6.5,2.65) node{{\scriptsize $1$}};

\draw[line width=0.5mm]  (0,0.5) to[bend right=0] (5,0.5); 
\draw[line width=0.5mm]  (5,0.5) to[bend left=0] (5,0); 

\draw[line width=0.5mm]  (5,0) to[bend left=0] (6.5,0); 

\draw[->,>=stealth]    (0,-0.5) -- (0,-0.8);
\draw[->,>=stealth]    (3,-0.5) -- (3,-0.8);
         
\end{tikzpicture}}
	\label{fig:distortion}
        \caption{Distortion}
    \end{subfigure}
    \hfill
    \begin{subfigure}{0.32\linewidth}
        \centering
        \scalebox{0.75}{\begin{tikzpicture}[scale=1.0]
\draw[->] (0,0) -- (6.8,0) node[anchor=north] {$t$};
\draw[->] (0,0) -- (0,3.5);
\draw (0, 3.7) node {$\bar{d}_t$};
\draw (-0.2, 0.5) node {$1$};

\fill (0,0)  circle[radius=1.5pt];
\fill (1,0)  circle[radius=1.5pt];
\fill (2,0)  circle[radius=1.5pt];
\fill (3,0)  circle[radius=1.5pt];
\fill (4,0)  circle[radius=1.5pt];
\fill (5,0)  circle[radius=1.5pt];
\fill (6,0)  circle[radius=1.5pt];

\draw	(0,0) node[anchor=north] {$t_0$}
        (1,0) node[anchor=north] {$t_1$}
		(2,0) node[anchor=north] {$t_2$}
		(3,0) node[anchor=north] {$t_3$}
		(4,0) node[anchor=north] {$t_4$}
		(5,0) node[anchor=north] {$t_5$}
        (6,0) node[anchor=north] {$t_{6}$};

\draw[dotted] (1,0) -- (1,3);
\draw[dotted] (2,0) -- (2,3);
\draw[dotted] (3,0) -- (3,3);
\draw[dotted] (4,0) -- (4,3);
\draw[dotted] (5,0) -- (5,3);
\draw[dotted] (6,0) -- (6,3);
\draw[dotted] (0,0.5) -- (6.5,0.5);

              (6.55,-0.15) -- (6.55,0.15);

\draw	(0.5,3) node{{\scriptsize $X=1$}}
        (0.5,2.7) node{{\scriptsize $\hat{X}=2$}}
		(1.5,2.99) node{{\scriptsize $1$}}
        (1.5,2.65) node{{\scriptsize $2$}}
        (2.5,2.99) node{{\scriptsize $1$}}
        (2.5,2.65) node{{\scriptsize $2$}}
        (3.5,2.99) node{{\scriptsize $3$}}
        (3.5,2.65) node{{\scriptsize $1$}}
        (4.5,2.99) node{{\scriptsize $3$}}
        (4.5,2.65) node{{\scriptsize $1$}}
        (5.5,2.99) node{{\scriptsize $1$}}
        (5.5,2.65) node{{\scriptsize $1$}}
        (6.5,2.99) node{{\scriptsize $1$}}
        (6.5,2.65) node{{\scriptsize $1$}};

 
\draw[line width=0.5mm]  (0,1) to[bend right=0] (3,1); 
\draw[line width=0.5mm]  (3,1) to[bend right=0] (3,0.2); 

\draw[line width=0.5mm]  (3,0.2) to[bend left=0] (5,0.2); 
\draw[line width=0.5mm]  (5,0.2) to[bend left=0] (5,0); 

\draw[line width=0.5mm]  (5,0) to[bend left=0] (6.5,0); 

\draw[->,>=stealth]    (0,-0.5) -- (0,-0.8);
\draw[->,>=stealth]    (3,-0.5) -- (3,-0.8);
         
\end{tikzpicture}}
	\label{fig:state-aware-distortion}
        \caption{Content-aware distortion}
    \end{subfigure}
    \hfill
    \begin{subfigure}{0.32\linewidth}
        \centering
        \scalebox{0.75}{\begin{tikzpicture}[scale=1.0]
\draw[->] (0,0) -- (6.8,0) node[anchor=north] {$t$};
\draw[->] (0,0) -- (0,3.5);
\draw (0, 3.7) node {$\Delta_t$};
\draw (-0.2, 0.5) node {$1$};

\fill (0,0)  circle[radius=1.5pt];
\fill (1,0)  circle[radius=1.5pt];
\fill (2,0)  circle[radius=1.5pt];
\fill (3,0)  circle[radius=1.5pt];
\fill (4,0)  circle[radius=1.5pt];
\fill (5,0)  circle[radius=1.5pt];
\fill (6,0)  circle[radius=1.5pt];

\draw	(0,0) node[anchor=north] {$t_0$}
        (1,0) node[anchor=north] {$t_1$}
		(2,0) node[anchor=north] {$t_2$}
		(3,0) node[anchor=north] {$t_3$}
		(4,0) node[anchor=north] {$t_4$}
		(5,0) node[anchor=north] {$t_5$}
        (6,0) node[anchor=north] {$t_{6}$};

\draw[dotted] (1,0) -- (1,3);
\draw[dotted] (2,0) -- (2,3);
\draw[dotted] (3,0) -- (3,3);
\draw[dotted] (4,0) -- (4,3);
\draw[dotted] (5,0) -- (5,3);
\draw[dotted] (6,0) -- (6,3);
\draw[dotted] (0,0.5) -- (6.5,0.5);
              (6.55,-0.15) -- (6.55,0.15);

\draw	(0.5,3) node{{\scriptsize $X=1$}}
        (0.5,2.7) node{{\scriptsize $\hat{X}=2$}}
		(1.5,2.99) node{{\scriptsize $1$}}
        (1.5,2.65) node{{\scriptsize $2$}}
        (2.5,2.99) node{{\scriptsize $1$}}
        (2.5,2.65) node{{\scriptsize $2$}}
        (3.5,2.99) node{{\scriptsize $3$}}
        (3.5,2.65) node{{\scriptsize $1$}}
        (4.5,2.99) node{{\scriptsize $3$}}
        (4.5,2.65) node{{\scriptsize $1$}}
        (5.5,2.99) node{{\scriptsize $1$}}
        (5.5,2.65) node{{\scriptsize $1$}}
        (6.5,2.99) node{{\scriptsize $1$}}
        (6.5,2.65) node{{\scriptsize $1$}};

\draw[line width=0.5mm]  (0,0.5) to[bend right=0] (3,2); 
\draw[line width=0.5mm]  (3,2) to[bend right=0] (3,0.5); 

\draw[line width=0.5mm]  (3,0.5) to[bend left=0] (6.5,2.25); 

\draw[->,>=stealth]    (0,-0.5) -- (0,-0.8);
\draw[->,>=stealth]    (3,-0.5) -- (3,-0.8);
         
\end{tikzpicture}}
	\label{fig:AoI}
        \caption{AoI}
    \end{subfigure}\\
    \begin{subfigure}{0.32\linewidth}
        \centering
        \scalebox{0.75}{\begin{tikzpicture}[scale=1.0]
\draw[->] (0,0) -- (6.8,0) node[anchor=north] {$t$};
\draw[->] (0,0) -- (0,3.5);
\draw (0, 3.7) node {$\Delta_t$};
\draw (-0.2, 0.5) node {$1$};

\fill (0,0)  circle[radius=1.5pt];
\fill (1,0)  circle[radius=1.5pt];
\fill (2,0)  circle[radius=1.5pt];
\fill (3,0)  circle[radius=1.5pt];
\fill (4,0)  circle[radius=1.5pt];
\fill (5,0)  circle[radius=1.5pt];
\fill (6,0)  circle[radius=1.5pt];

\draw	(0,0) node[anchor=north] {$t_0$}
        (1,0) node[anchor=north] {$t_1$}
		(2,0) node[anchor=north] {$t_2$}
		(3,0) node[anchor=north] {$t_3$}
		(4,0) node[anchor=north] {$t_4$}
		(5,0) node[anchor=north] {$t_5$}
        (6,0) node[anchor=north] {$t_{6}$};

\draw[dotted] (1,0) -- (1,3);
\draw[dotted] (2,0) -- (2,3);
\draw[dotted] (3,0) -- (3,3);
\draw[dotted] (4,0) -- (4,3);
\draw[dotted] (5,0) -- (5,3);
\draw[dotted] (6,0) -- (6,3);
\draw[dotted] (0,0.5) -- (6.5,0.5);

\draw	(0.5,3) node{{\scriptsize $X=1$}}
        (0.5,2.7) node{{\scriptsize $\hat{X}=2$}}
		(1.5,2.99) node{{\scriptsize $1$}}
        (1.5,2.65) node{{\scriptsize $2$}}
        (2.5,2.99) node{{\scriptsize $1$}}
        (2.5,2.65) node{{\scriptsize $2$}}
        (3.5,2.99) node{{\scriptsize $3$}}
        (3.5,2.65) node{{\scriptsize $1$}}
        (4.5,2.99) node{{\scriptsize $3$}}
        (4.5,2.65) node{{\scriptsize $1$}}
        (5.5,2.99) node{{\scriptsize $1$}}
        (5.5,2.65) node{{\scriptsize $1$}}
        (6.5,2.99) node{{\scriptsize $1$}}
        (6.5,2.65) node{{\scriptsize $1$}};

\draw[line width=0.5mm]  (0,0.5) to[bend right=0] (5,3); 
\draw[line width=0.5mm]  (5,3) to[bend right=0] (5,0); 
\draw[line width=0.5mm]  (5,0) to[bend left=0] (6.5,0); 

\draw[->,>=stealth]    (0,-0.5) -- (0,-0.8);
\draw[->,>=stealth]    (3,-0.5) -- (3,-0.8);
         
\end{tikzpicture}}
	\label{fig:AoII}
        \caption{AoII}
    \end{subfigure}
    \hfill
    \begin{subfigure}{0.32\linewidth}
        \centering
        \scalebox{0.75}{\begin{tikzpicture}[scale=1.0]
\draw[->] (0,0) -- (6.8,0) node[anchor=north] {$t$};
\draw[->] (0,0) -- (0,3.5);
\draw (0, 3.7) node {$\Delta_t$};
\draw (-0.2, 0.5) node {$1$};

\fill (0,0)  circle[radius=1.5pt];
\fill (1,0)  circle[radius=1.5pt];
\fill (2,0)  circle[radius=1.5pt];
\fill (3,0)  circle[radius=1.5pt];
\fill (4,0)  circle[radius=1.5pt];
\fill (5,0)  circle[radius=1.5pt];
\fill (6,0)  circle[radius=1.5pt];

\draw	(0,0) node[anchor=north] {$t_0$}
        (1,0) node[anchor=north] {$t_1$}
		(2,0) node[anchor=north] {$t_2$}
		(3,0) node[anchor=north] {$t_3$}
		(4,0) node[anchor=north] {$t_4$}
		(5,0) node[anchor=north] {$t_5$}
        (6,0) node[anchor=north] {$t_{6}$};

\draw[dotted] (1,0) -- (1,3);
\draw[dotted] (2,0) -- (2,3);
\draw[dotted] (3,0) -- (3,3);
\draw[dotted] (4,0) -- (4,3);
\draw[dotted] (5,0) -- (5,3);
\draw[dotted] (6,0) -- (6,3);
\draw[dotted] (0,0.5) -- (6.5,0.5);

\draw	(0.5,3) node{{\scriptsize $X=1$}}
        (0.5,2.7) node{{\scriptsize $\hat{X}=2$}}
		(1.5,2.99) node{{\scriptsize $1$}}
        (1.5,2.65) node{{\scriptsize $2$}}
        (2.5,2.99) node{{\scriptsize $1$}}
        (2.5,2.65) node{{\scriptsize $2$}}
        (3.5,2.99) node{{\scriptsize $3$}}
        (3.5,2.65) node{{\scriptsize $1$}}
        (4.5,2.99) node{{\scriptsize $3$}}
        (4.5,2.65) node{{\scriptsize $1$}}
        (5.5,2.99) node{{\scriptsize $1$}}
        (5.5,2.65) node{{\scriptsize $1$}}
        (6.5,2.99) node{{\scriptsize $1$}}
        (6.5,2.65) node{{\scriptsize $1$}};

\draw[line width=0.5mm]  (0,0.5) to[bend right=0] (3,2); 
\draw[line width=0.5mm]  (3,2) to[bend right=0] (3,0.5); 

\draw[line width=0.5mm]  (3,0.5) to[bend left=0] (5,1.5); 
\draw[line width=0.5mm]  (5,1.5) to[bend left=0] (5,0); 

\draw[line width=0.5mm]  (5,0) to[bend left=0] (6.5,0); 

\draw[->,>=stealth]    (0,-0.5) -- (0,-0.8);
\draw[->,>=stealth]    (3,-0.5) -- (3,-0.8);
         
\end{tikzpicture}}
	\label{fig:AoCE}
        \caption{AoCE}
    \end{subfigure}
    \hfill
    \begin{subfigure}{0.32\linewidth}
        \centering
        \scalebox{0.75}{\begin{tikzpicture}[scale=1.0]
\draw[->] (0,0) -- (6.8,0) node[anchor=north] {$t$};
\draw[->] (0,0) -- (0,3.5);
\draw (0.1, 3.7) node {$c(S_t)$};
\draw (-0.2, 0.5) node {$1$};

\fill (0,0)  circle[radius=1.5pt];
\fill (1,0)  circle[radius=1.5pt];
\fill (2,0)  circle[radius=1.5pt];
\fill (3,0)  circle[radius=1.5pt];
\fill (4,0)  circle[radius=1.5pt];
\fill (5,0)  circle[radius=1.5pt];
\fill (6,0)  circle[radius=1.5pt];

\draw	(0,0) node[anchor=north] {$t_0$}
        (1,0) node[anchor=north] {$t_1$}
		(2,0) node[anchor=north] {$t_2$}
		(3,0) node[anchor=north] {$t_3$}
		(4,0) node[anchor=north] {$t_4$}
		(5,0) node[anchor=north] {$t_5$}
        (6,0) node[anchor=north] {$t_6$};

\draw[dotted] (1,0) -- (1,3);
\draw[dotted] (2,0) -- (2,3);
\draw[dotted] (3,0) -- (3,3);
\draw[dotted] (4,0) -- (4,3);
\draw[dotted] (5,0) -- (5,3);
\draw[dotted] (6,0) -- (6,3);
\draw[dotted] (0,0.5) -- (6.5,0.5);
              (6.55,-0.15) -- (6.55,0.15);

\draw	(0.5,3) node{{\scriptsize $X=1$}}
        (0.5,2.7) node{{\scriptsize $\hat{X}=2$}}
		(1.5,2.99) node{{\scriptsize $1$}}
        (1.5,2.65) node{{\scriptsize $2$}}
        (2.5,2.99) node{{\scriptsize $1$}}
        (2.5,2.65) node{{\scriptsize $2$}}
        (3.5,2.99) node{{\scriptsize $3$}}
        (3.5,2.65) node{{\scriptsize $1$}}
        (4.5,2.99) node{{\scriptsize $3$}}
        (4.5,2.65) node{{\scriptsize $1$}}
        (5.5,2.99) node{{\scriptsize $1$}}
        (5.5,2.65) node{{\scriptsize $1$}}
        (6.5,2.99) node{{\scriptsize $1$}}
        (6.5,2.65) node{{\scriptsize $1$}};

\draw[line width=0.5mm]  (0,1) to[bend right=30] (3,2.9); 
\draw[line width=0.5mm]  (3,2.9) to[bend right=0] (3,0.7); 

\draw[line width=0.5mm]  (3,0.7) to[bend left=15] (5,1.4); 
\draw[line width=0.5mm]  (5,1.4) to[bend left=0] (5,0); 

\draw[line width=0.5mm]  (5,0) to[bend left=0] (6.5,0); 

\draw[->,>=stealth]    (0,-0.5) -- (0,-0.8);
\draw[->,>=stealth]    (3,-0.5) -- (3,-0.8); 
\end{tikzpicture}}
	\label{fig:nonlinearAoCE}
        \caption{Significance-aware AoCE}
    \end{subfigure}\\   
    \vspace{-0.1in}
    \caption{Illustration of typical distortion and age metrics, where $(1, 2)$ is a missed alarm, and $(3, 1)$ is a false alarm. New measurements are received at $t_0$ and $t_3$; however, the system remains erroneous due to changes in the source state. The system is automatically synced at $t_5$. Distortion metrics (a)-(b) evaluate only the current estimation error, whereas age-based metrics (c)-(f) account for the history of past observations. The AoI (c) ignores the source evolution and continues to grow even when the system is synced. The AoII (d) increases by $1$ whenever an estimation error occurs, while the AoCE (e) resets upon error variations. The significance-aware AoCE (f) assigns exponential age penalties to missed alarms and logarithmic penalties to false alarms, accounting for the urgency of the lasting impact of different estimation errors.}
    \label{fig:age-figures}
\end{figure*}

\begin{remark}
    We note that AoCE~\eqref{eq:aoce} is change-aware, as it resets upon error changes, whereas AoII is change-agnostic and increments by $1$ regardless of the error type. This feature allows us to reconstruct AoII from AoCE without additional information, but the reverse is not possible. Moreover, it offers the flexibility to handle the lasting impact of different estimation errors separately. However, age alone may not suffice, as it ignores the urgency of the current estimation error. This gives incentives to significance-aware age metrics.
\end{remark}

Let $S_t$ denote the system state at decision epoch $t$, where
\begin{align}
    S_t = (X_t, \hat{X}_t, \Delta_t).\label{eq:system-state}
\end{align}
The \emph{significance-aware AoCE} for estimation error $(X_t,\hat{X}_t)$ at decision epoch $t$ is defined as
\begin{align}
    c(S_t) = c(I_t) \triangleq \bar{d}(X_t, \hat{X}_t)\cdot g_{X_t, \hat{X}_t}(\Delta_t),\label{eq:cost-function}
\end{align}
where $(g_{i,j}(\cdot), i,j\in\set{X})$ are \emph{non-negative}, \emph{non-decreasing}, and possibly \emph{unbounded} age functions. $g_{i,j}(\delta)$ represents the cost of being in error $(i,j)$ for $\delta$ consecutive time slots. These age functions are quite general and may be discontinuous and non-convex. Given that no cost is incurred in synced states, we impose $g_{i, i}(\cdot) = 0$ for all $i \in\set{X}$. 

\begin{remark}\label{remark:age-process}
    The significance of information is represented by the content-aware distortion $\bar{d}$, the history-dependent lasting impact $\Delta_t$, and the non-linear age functions $g_{i,j}$. The system state \eqref{eq:system-state} can be interpreted as a collection of $M(M-1)$ dependent age processes, each corresponding to an estimation error (see Fig.~\ref{fig:state-evolution}). Notably, the AoCE \eqref{eq:aoce} is not a sufficient statistic\footnote{A process $\Gamma_t \subseteq I_t$ is called a sufficient statistic if there is no loss of optimality in using transmission rules of the form: $A_t = \pi_t(\Gamma_t)$\cite{mahajan2016decentralized}. In other words, it summarizes all relevant information about the history.} for decision-making unless Assumption \ref{assumption:source} holds. We formally state this finding in Lemma~\ref{lemma:sufficient-statistic}. This result can be generalized to any age process whose evolution depends on the source pattern. 
\end{remark}

\begin{assumption}\label{assumption:source}
    The source is non-prioritized, i.e., $D_{i,j}=D$ and $g_{i,j} = g$ for all $i\neq j$, and is symmetric with equal state change probabilities, i.e.,
    \begin{align}
        Q_{i,j} = \begin{cases}
            p,        &\text{if}~i\neq j,\\
            \bar{p},  &\text{otherwise},
        \end{cases}\label{eq:symmetric-source}
    \end{align}
    where $\bar{p} + (M-1)p = 1, 0 < p < 1$. In other words, all estimation errors contribute equally to the system in terms of both costs and occurrences.
\end{assumption}

\begin{lemma}\label{lemma:sufficient-statistic}
    The AoCE~\eqref{eq:aoce} is a sufficient statistic for the MDP only if Assumption~\ref{assumption:source} holds.
\end{lemma}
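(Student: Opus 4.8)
Since the statement is an ``only if'', the plan is to prove the contrapositive: if Assumption~\ref{assumption:source} fails, then the AoCE~\eqref{eq:aoce} is \emph{not} a sufficient statistic. First I would record that the triple $S_t=(X_t,\hat{X}_t,\Delta_t)$ of~\eqref{eq:system-state} \emph{is} a bona fide sufficient statistic: because the ACK/NACK feedback lets the sensor track $(X_t,\hat{X}_t)$ exactly, the conditional law of $S_{t+1}$ given $(S_t,A_t)$ is fully determined by the transition row $Q_{X_t,\cdot}$ together with whether the receiver's estimate is refreshed to $X_t$ (probability $A_t p_s$) or kept at $\hat{X}_t$ (probability $1-A_t p_s$), while the one-stage cost is $c(S_t)$ of~\eqref{eq:cost-function}; hence $\{S_t\}$ is a controlled Markov chain. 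The question then reduces to whether the optimal value function $V^{\star}(i,j,\delta)$ of this MDP depends on $(i,j,\delta)$ only through $\delta$ -- equivalently, whether one can restrict without loss of optimality to rules of the form $A_t=\pi_t(\Delta_t)$.

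Next I would isolate the two ways Assumption~\ref{assumption:source} can be violated and show each destroys sufficiency. \emph{(i) Non-uniform significance:} if $(i,j)\mapsto \bar{d}(i,j)\,g_{i,j}(\delta)$ is not the same function of $\delta$ for all $i\neq j$, then already the one-stage cost $c(S_t)=\bar{d}(X_t,\hat{X}_t)\,g_{X_t,\hat{X}_t}(\Delta_t)$ is not a function of $\Delta_t$ alone, so staying silent at age $\delta$ costs an amount that depends on \emph{which} error is active, and the continuation value $V^{\star}(i,j,\delta)$ must genuinely vary with $(i,j)$. \emph{(ii) Asymmetric source:} if~\eqref{eq:symmetric-source} fails, then some self-loop probability $Q_{i,i}$ or some off-diagonal entry $Q_{i,j}$ ($i\neq j$) differs across errors; consequently the probability that the active error persists (so $\Delta$ increments), the probability of a free re-sync, and the probability of being replaced by a fresh error all depend on the row $Q_{X_t,\cdot}$, i.e., on $(X_t,\hat{X}_t)$ and not merely on $\Delta_t$. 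In either case $\{\Delta_t\}$ is not Markov under $\Delta_t$-measurable controls, and an optimal sensor must consult $(X_t,\hat{X}_t)$ to forecast how fast staying silent accrues cost (or is absolved by an auto-sync).

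To convert ``the induced $\Delta_t$-chain is not lumpable'' into ``$\Delta_t$ is not sufficient'', I would make the dependence of the optimum on $(i,j)$ explicit via a monotone-coupling argument on the Bellman operator: between two reachable error states $(i,j,\delta)$ and $(i',j',\delta)$ of equal age, the one with the heavier cost slope, the larger persistence probability $Q_{i,i}$, or the smaller self-sync probability $Q_{i,j}$ has strictly larger value, which -- since a transmission carries a fixed cost -- forces a transmission at some age at which the cheaper error does not trigger one; the two states then demand different actions, contradicting $\Delta_t$-sufficiency. A lighter alternative is to exhibit a single witness instance, e.g.\ the binary chain of~\cite{luo2024MobiHoc} whose optimal missed-alarm and false-alarm thresholds provably differ. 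Reachability of the comparison states with positive probability under some policy follows from irreducibility of $Q$ and $\set{X}_{\mathrm{ap}}\neq\emptyset$.

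The hard part is precisely this last step. Failure of lumpability is only \emph{prima facie} weaker than non-sufficiency, since the state-dependent discrepancies in the one-stage cost and in the $\Delta$-kernel could in principle cancel in the optimum -- most visibly in the degenerate case $g_{i,j}\equiv 0$ for all $i,j$, which makes every policy optimal and must be excluded or treated separately. A fully rigorous ``only if'' therefore needs either a structural/monotonicity analysis of $V^{\star}$ valid for \emph{every} parameter configuration violating Assumption~\ref{assumption:source}, or a sufficiently generic family of witness instances; I would pursue the monotone-coupling route and flag the all-zero-cost case as an explicit exception.
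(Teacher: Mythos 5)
Your plan and the paper's proof part ways at the definition of ``sufficient statistic.'' The paper does not work with the optimality-based reading you adopt (no loss of optimality when restricting to $\Delta_t$-measurable rules); its appendix invokes the two-part criterion of \cite{mahajan2016decentralized}: $\Delta_t$ is sufficient if it satisfies the controlled Markov property $\Pr[\Delta_{t+1}\mid I_t,A_t]=\Pr[\Delta_{t+1}\mid \Delta_t,A_t]$ together with the requirement that the per-stage cost be determined by $(\Delta_t,A_t)$. Under that criterion your observations (i) and (ii) already \emph{are} the ``only if'' argument: if the $D_{i,j}$ or $g_{i,j}$ differ across errors, then $c(I_t)=\bar d(X_t,\hat X_t)\,g_{X_t,\hat X_t}(\Delta_t)$ is not a function of $\Delta_t$ alone, and if \eqref{eq:symmetric-source} fails, the conditional law of $\Delta_{t+1}$ depends on the row $Q_{X_t,\cdot}$, hence on $(X_t,\hat X_t)$ beyond $\Delta_t$. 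The paper disposes of this direction in essentially one sentence and then spends the bulk of its proof on the converse: computing the $\Delta$-kernel \eqref{eq:age-symmetric-0}--\eqref{eq:age-symmetric-1} and noting $c(I_t)=Dg(\Delta_t)$ under Assumption~\ref{assumption:source}. You do not address that half, though it is not literally demanded by the ``only if'' wording of Lemma~\ref{lemma:sufficient-statistic}.

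Measured against the stronger claim you set yourself, the proposal has exactly the gap you flag: failure of the controlled-Markov/cost-measurability properties does not by itself imply that restricting to $A_t=\pi_t(\Delta_t)$ strictly loses optimality, and the monotone-coupling/Bellman-comparison step that would establish a strict dependence of the optimal action on $(i,j)$ for \emph{every} violation of Assumption~\ref{assumption:source} is only sketched, never carried out. Indeed, your own degenerate observation (constant or identical age penalties with an asymmetric $Q$, where every policy can be optimal or the thresholds happen to coincide) shows that the optimality-based ``only if'' is false without additional hypotheses, so no uniform argument of that kind can exist; at best one obtains it for generic parameters or via explicit witnesses such as the binary chain of \cite{luo2024MobiHoc}. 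The practical conclusion is to retreat to the measurability-based criterion the paper uses, at which point the first half of your proposal reproduces the paper's (brief, informal) necessity argument and the heavier machinery in your closing paragraphs becomes unnecessary.
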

\begin{proof}
    See Appendix~\ref{proof:sufficient-statistic}.
\end{proof}

\vspace{-0.1in}
\subsection{System Evolution}\label{sec:system-evolution}
The system state $\{S_t\}_{t\geq 1}$ is a three-dimensional controlled Markov chain. It is possible to achieve the desired performance by controlling the chain's transition probabilities. Let 
\begin{align}
    P = (P_{s,s^\prime}(a), s,s^\prime\in\set{S}, a\in\set{A})\label{eq:trans-prob}
\end{align}
denote the transition probability matrix, where
\begin{align}
    P_{s,s^\prime}(a) = \Pr[S_{t+1}=s^\prime|S_t = s, A_t = a]
\end{align}
is the probability of transitioning to state $s^\prime$ at time $t+1$, given that the system is in state $s$ and action $a$ is taken at time $t$. 

\begin{figure*}[t]
    \begin{subfigure}{0.493\linewidth}
        \centering
        \includegraphics[width=\linewidth]{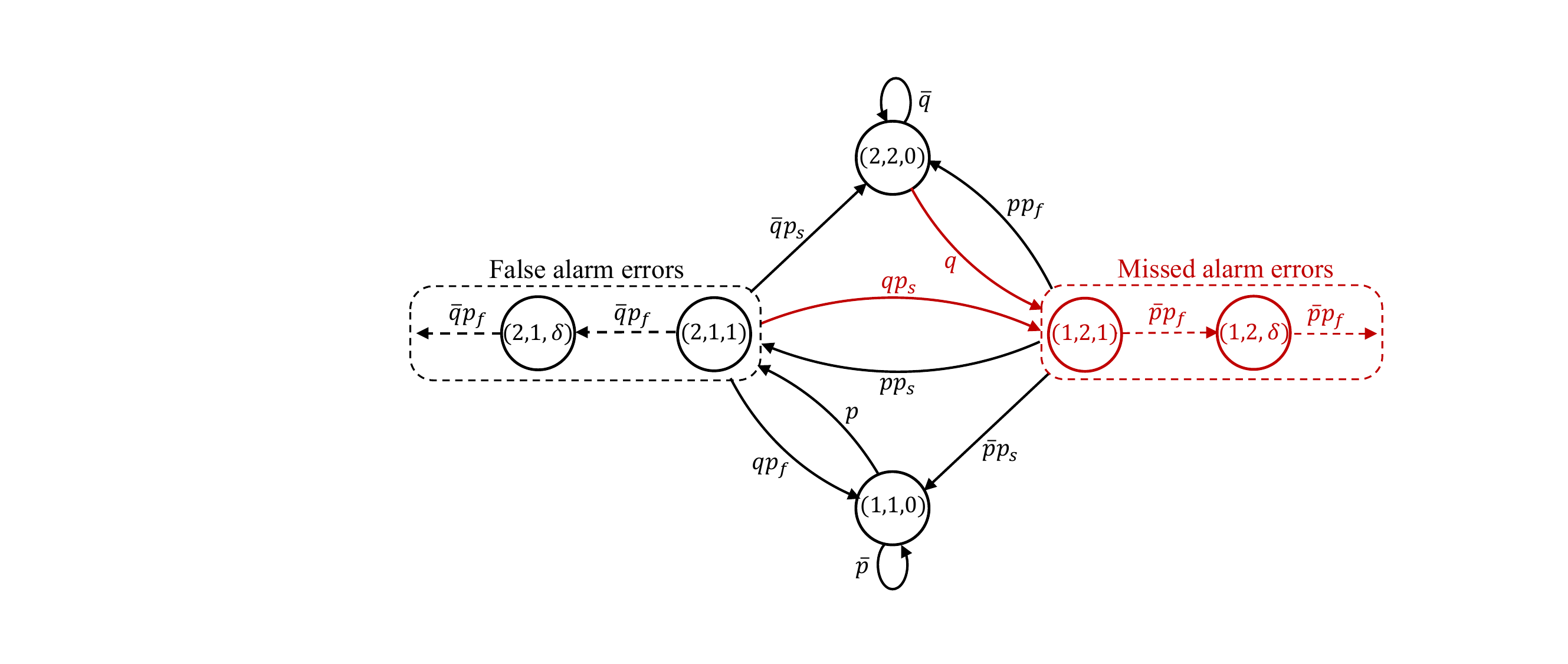}
        \caption{Always-transmit policy}
        \label{fig:state-DTMC-1}
    \end{subfigure}
    \hfill
    \begin{subfigure}{0.493\linewidth}
        \centering
        \includegraphics[width=\linewidth]{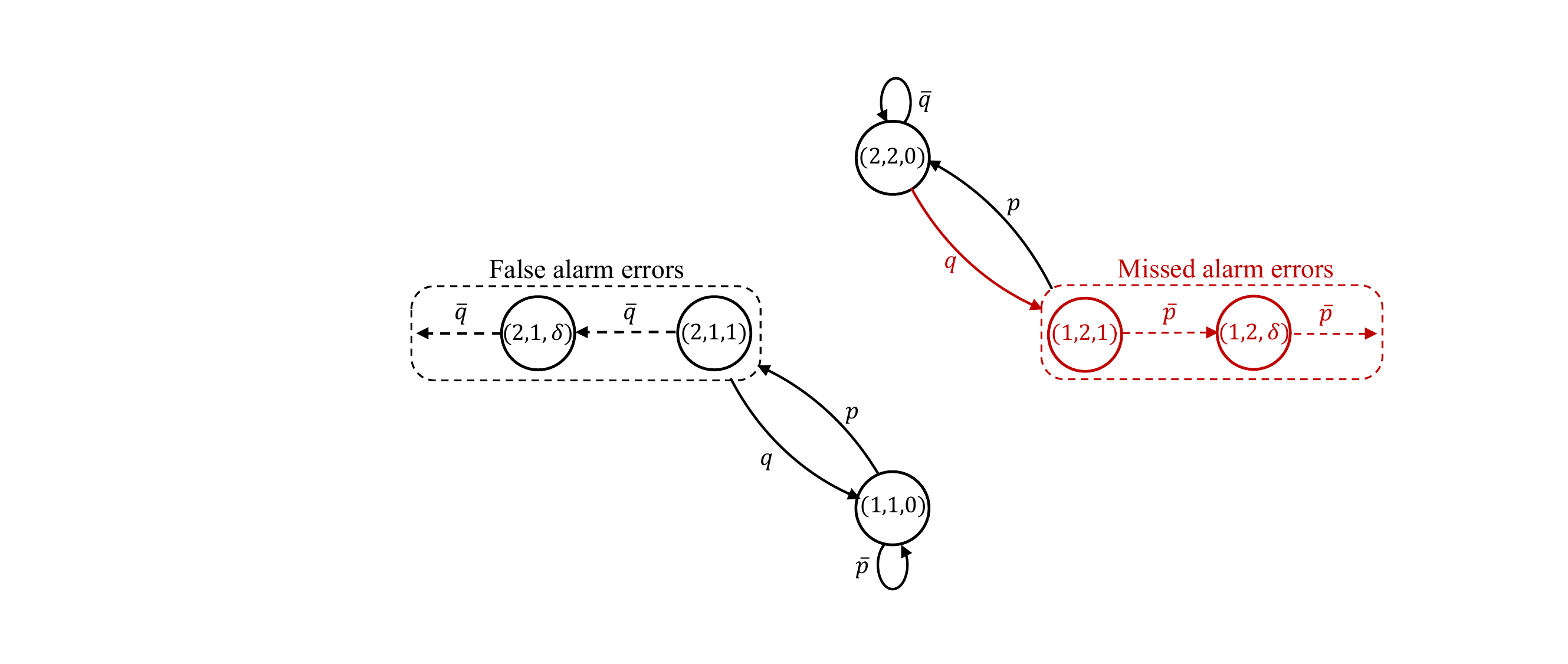}
        \caption{Non-transmit policy}
        \label{fig:state-DTMC-0}
    \end{subfigure}  
    \vspace{-0.1in}
    \caption{DTMC representing the system state evolution of a binary Markov chain, where $p=Q_{1,2}, q=Q_{2,1},\bar{p}=1-p,\bar{q}=1-q$.}
    \label{fig:state-evolution}
\end{figure*}

Fig.~\ref{fig:state-evolution} illustrates the system evolution of a binary chain. There are two dependent age processes interconnected through the synced states. The always-transmit policy (i.e., $A_t=1, t\geq 1$) induces an irreducible Markov chain, whereas the non-transmit policy (i.e., $A_t=0, t\geq 1$) divides the state space into two isolated groups. Below, we discuss the transition probabilities of a multi-state model.

For any estimation error $(i, j), i\neq j$ with an age $\delta\geq 1$, if the sensor decides not to transmit, the system will: (1) remain in this error, i.e., $s^\prime = (i, j, \delta+1)$, with probability (w.p.) $Q_{i, i}$, (2) become synced, i.e., $s^\prime = (j, j, 0)$, w.p. $Q_{i,j}$, or (3) change to another error $(k, j, 1), k\neq i,j$ w.p. $Q_{i, k}$. Thus, we have
\begin{align}
    \Pr[s^\prime|(i, j, \delta), 0] = 
    \begin{cases}
        Q_{i,i}, &\text{if}~ s^\prime = (i, j, \delta+1),\\
        Q_{i, j}, &\text{if}~s^\prime = (j, j, 0),\\
        Q_{i, k}, &\text{if}~s^\prime = (k, j, 1), k\neq i,j,\\
        0,  &\text{otherwise.}\label{eq:prob-error-0}
    \end{cases}
\end{align}

If the sensor initiates a transmission and the packet is successfully received at the receiver w.p. $p_s$, the system may either enter a synced state $(i, i, 0)$ w.p. $Q_{i, i}p_s$ or change to a new error $(k_1, i, 1), k_1\neq i$ w.p. $Q_{i, k_1}p_s$. On the other hand, if the transmission is unsuccessful w.p. $p_f$, the system will: (1) stay in the current error w.p. $Q_{i, i}p_f$, (2) become synced w.p. $Q_{i, j}p_f$, or (3) change to another error $(k_2, j), k_2\neq i,j$ w.p. $Q_{i, k_2}p_f$. Thus, we obtain
\begin{align}
    \Pr[s^\prime|(i, j, \delta), 1] = 
    \begin{cases}
        Q_{i, i}p_s, &\text{if}~s^\prime = (i, i, 0),\\
        Q_{i,k_1}p_s, &\text{if}~ s^\prime = (k_1, i, 1), k_1\neq i,\\
        Q_{i, i}p_f, &\text{if}~s^\prime = (i, j, \delta+1),\\
        Q_{i, j}p_f, &\text{if}~s^\prime = (j, j, 0),\\
        Q_{i,k_2}p_f, &\text{if}~ s^\prime = (k_2, j, 1), k_2\neq i,j,\\
        0,  &\text{otherwise.} \label{eq:prob-error-1}
    \end{cases}
\end{align}

For any synced state $(i, i, 0), \forall i\in \set{X}$, the remote estimate remains unchanged, regardless of whether a new measurement is received or not. Therefore, the system will either remain synced w.p. $Q_{i, i}$, or enter an estimation error $(k, i), k\neq i$ w.p. $Q_{i,k}$. Formally, for each $a\in\{0, 1\}$,
\begin{align}
    \Pr[s^\prime|(i, i, 0), a] = 
    \begin{cases}
        Q_{i,i}, &\text{if}~ s^\prime = (i, i, 0),\\
        Q_{i, k}, &\text{if}~s^\prime = (k, i, 1), k\neq i,\\
        0,  &\text{otherwise.}
    \end{cases}\label{eq:prob-synced-states}
\end{align}
It should be noted that, in all the above cases, the AoCE is no greater than $1$ if the source is in a state $i\in\set{X}_\textrm{p}$.

\begin{lemma}\label{lemma:recurrence}
     The DTMC $\{S_t\}_{t\geq 1}$ controlled by the always-transmit policy is irreducible and positive recurrent. Moreover, for every pair of states $s$ and $s^\prime$ in $\set{S}$, the expected first passage time from $s$ to $s^\prime$ is finite.
\end{lemma}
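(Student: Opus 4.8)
The plan is to verify the three assertions in order, with positive recurrence and the finite first-passage property both resting on a single drift estimate; throughout I assume $0<p_s<1$ (if $p_s=1$ the age never grows and $\set{S}$ is finite, making the claim immediate, whereas $p_s=0$ makes the always-transmit policy coincide with the reducible non-transmit policy). For \emph{irreducibility}, note that under the always-transmit policy a successful transmission sets $\hat X_{t+1}=X_t$, so $(X_t,\hat X_t)$ behaves like a one-step-delayed copy of the $Q$-driven source. Fix $i^\star\in\set{X}_\textrm{ap}$ (nonempty by assumption); since $Q_{i^\star,i^\star}>0$, the state $(i^\star,i^\star,0)$ is reachable from anywhere by steering $X$ to $i^\star$ and then taking one self-loop together with a successful transmission. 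I would then show every state communicates with $(i^\star,i^\star,0)$: from $(i^\star,i^\star,0)$ one reaches an arbitrary error state $(i,j,\delta)\in\set{S}$ by a finite sequence of forced $Q$-transitions and transmission outcomes — steer $X$ to $j$, trigger a successful transmission so that $\hat X$ becomes $j$, then steer $X$ to $i$ while letting all transmissions fail, and, when $i\in\set{X}_\textrm{ap}$, raise the age to $\delta$ through $\delta-1$ more self-loops with failed transmissions, each of probability $Q_{i,i}p_f>0$ — every such event having strictly positive probability because $Q$ is irreducible. Hence $\set{S}$ is a single communicating class.

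For \emph{positive recurrence} I would invoke Foster's criterion with the Lyapunov function $V(s)=\Delta$, the age coordinate (so $V\equiv 0$ on synced states). By \eqref{eq:prob-error-1}, at an error state $s=(i,j,\delta)$ with $i\in\set{X}_\textrm{ap}$ the age moves to $\delta+1$ with probability $Q_{i,i}p_f$ and otherwise drops to at most $1$; writing $\beta:=\max_{i\in\set{X}_\textrm{ap}}Q_{i,i}p_f$, which is strictly less than $1$ since every $Q_{i,i}<1$ ($Q$ irreducible, $M\ge2$) and $p_f<1$, this gives $\mathbb{E}\bigl[V(S_{t+1})\mid S_t=s\bigr]\le\beta(\delta+1)+1$, so the one-step drift is at most $-(1-\beta)\delta+\beta+1$, which is $\le-1$ once $\delta\ge\delta_0:=\lceil(\beta+2)/(1-\beta)\rceil$. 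On the finite set $C:=\{s\in\set{S}:\Delta<\delta_0\}$ the one-step expectation of $V$ is trivially finite since jumps are bounded by one. Foster's theorem, together with the irreducibility established above, then yields positive recurrence of $\{S_t\}_{t\ge1}$.

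For the \emph{finite expected first passage times}, given irreducibility and positive recurrence one may appeal to the standard fact that $\mathbb{E}_s[\tau_{s'}]<\infty$ for every ordered pair: with $m_{s'}:=\mathbb{E}_{s'}[\tau_{s'}^+]<\infty$ the mean return time and $p:=\mathbb{P}_{s'}(\tau_s<\tau_{s'}^+)>0$ (positive because a minimal-length path realizes a trajectory from $s'$ to $s$ that does not revisit $s'$), a one-line regeneration argument gives $m_{s'}\ge p\,\mathbb{E}_s[\tau_{s'}]$, hence $\mathbb{E}_s[\tau_{s'}]\le m_{s'}/p<\infty$. If a self-contained derivation is preferred, the drift inequality already yields $\mathbb{E}_s[\tau_C]\le V(s)+O(1)<\infty$ for all $s$, which combined with the finiteness of hitting times of the finite irreducible chain induced on $C$ and a geometric-trials bound for climbing to a high-age target gives the same conclusion.

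The step requiring the most care is the exact description of the reachable state space $\set{S}$ and the verification of irreducibility on it — in particular handling source states $i\in\set{X}_\textrm{p}$, for which only $\delta=1$ occurs, and keeping the channel hypothesis $0<p_s<1$ explicit throughout. Once irreducibility is secured, the drift computation is brief and the finite-first-passage property follows essentially by citation.
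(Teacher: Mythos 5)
Your proof is correct, but it takes a genuinely different route from the paper's. The paper collapses the age coordinate: since $\Delta_t$ is determined by the trajectory of $Z_t=(X_t,\hat{X}_t)$, it reduces the claim to the \emph{finite} chain $Z_t$ under the always-transmit policy, proves irreducibility of $Z_t$ by exhibiting positive-probability sample paths (runs of failed transmissions, then one success, using irreducibility of $Q$), and then positive recurrence and finiteness of all mean first-passage times come for free from finiteness of the $Z$-state space. You instead work directly on the countable chain $\{S_t\}_{t\ge1}$: you build positive-probability paths on $\set{S}$ itself, tame the unbounded age coordinate with a Foster--Lyapunov drift argument using $V(s)=\Delta$ and $\beta=\max_{i\in\set{X}_\textrm{ap}}Q_{i,i}p_f<1$, and obtain pairwise finite mean passage times from the regeneration inequality $m_{s'}\ge \Pr_{s'}(\tau_s<\tau_{s'}^+)\,\mathbb{E}_s[\tau_{s'}]$; all three steps are sound. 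Your version has the merit of handling the unbounded coordinate explicitly (the paper's reduction from $S_t$ to $Z_t$ is stated quite tersely), at the cost of invoking Foster's criterion, while the paper's reduction is shorter and needs no drift analysis. Two small remarks: your explicit hypothesis $0<p_s<1$ is indeed needed for irreducibility on the full $\set{S}$ (the paper uses it implicitly through path probabilities of the form $Q^{t}p_f^{t}p_s$); and your reachability sketch covers error states and $(i^\star,i^\star,0)$ but not the remaining synced states --- for $i\in\set{X}_\textrm{p}$ the state $(i,i,0)$ can only be entered via a failed transmission from an error $(k,i,\cdot)$ with $Q_{k,i}>0$ --- a one-line addition you should make when writing the argument out in full.
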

\begin{proof}
    See Appendix~\ref{proof:recurrence}.
\end{proof}

\subsection{Optimal Transmission Problem}
The goal is to achieve a desired balance between estimation performance and communication cost. Given the cost of each transmission $\lambda$, the (per-stage) cost of taking an action $A_t$ in state $S_t$ is given by
\begin{align}
    l(S_t, A_t) = c(S_{t})+ \lambda \mathds{1}\{A_t\neq 0\}.\label{eq:total-cost}
\end{align}
The expected average cost of a transmission policy $\pi$ over an infinite horizon is defined as
\begin{align}
    \set{L}(\pi) \triangleq \limsup_{T\rightarrow\infty}\frac{1}{T}\sum_{t=1}^{T}\mathbb{E}^{\pi} \bigl[l(S_t, A_t)\big|S_1 = s_1\bigr],\label{eq:average-cost}
\end{align}
where $\mathbb{E}^\pi$ represents the conditional expectation, given that policy $\pi$ is employed with initial state $s_1$. The sensor aims to determine the optimal policy $\pi^*$ to minimize \eqref{eq:average-cost}, i.e.,
\begin{align}
    \set{L}^* = \inf_{\pi\in\Pi} \set{L}(\pi),\label{problem:MDP}
\end{align}
where $\Pi$ is the set of all admissible policies. 

Problem \eqref{problem:MDP} is a Markov Decision Process (MDP) with an average cost optimality criterion. This MDP can be described by a tuple $(\set{S}, \set{A}, P, l)$, where $\set{S}$ is the state space of all possible values of $S_t$, $\set{A}=\{0, 1\}$ is the action space, $P$ is the transition probability matrix defined in \eqref{eq:trans-prob}, and $l$ is the cost function given by \eqref{eq:total-cost}. The state space $\set{S}$ is the union of the set of all synced states, $\set{S}_0$, and $M(M-1)$ sets of error holding times, $\set{S}_{i,j}$ for all $i\neq j$. These sets are defined as
\begin{align}
    \set{S}_0 &\triangleq \{(i,i,0): i\in\set{X}\},\\
    \set{S}_{i,j} &\triangleq \begin{cases}
        \{(i,j, \delta):\delta\geq 1\}, &\text{if}~i\neq j, i\in\set{X}_\textrm{ap},\\
        \{(i,j, 1)\}, &\text{if}~i\neq j, i\in\set{X}_\textrm{p}.
    \end{cases}
\end{align}

We note that $\set{S}$ is a \emph{countably infinite} set\footnote{A counterexample is when the source has no self-transitions, i.e., $\set{X}_\textrm{ap}=\emptyset$.} since the error holding time can grow indefinitely. Consequently, the problem encounters computing and memory challenges because classical dynamic programming methods cannot iterate over an infinite state space. Moreover, due to unbounded per-stage costs, an optimal policy may not exist\cite{sennott1998stochastic}. The following questions relating to the optimal policy are of interest:
\begin{enumerate}
    \item Under what conditions does an optimal policy exist?
    \item Are there any special structures of the optimal policy that facilitate policy implementation and computation?
    \item Is it possible to achieve asymptotic optimality by approximating the MDP with a finite state space?
    \item How to compute the optimal policy with reduced computation?
\end{enumerate}

\section{Main Results}\label{sec:main-results}
This section aims to answer the above questions. We first give sufficient conditions on the source pattern, age functions, and channel reliability that yield a deterministic optimal policy for the MDP. Next, we prove that the optimal policy exhibits a \emph{switching structure}, which facilitates policy storage and algorithm design. We also give sufficient conditions under which the optimal policy degenerates into a simple \emph{threshold policy}. For numerical tractability, we propose a state-space truncation method and show the asymptotic optimality of the truncated MDP. By exploiting these findings, we develop a structure-aware algorithm to compute this switching policy with reduced computation overhead.

\subsection{Existence of an Optimal Policy}
Recall that the AoCE is countably infinite, and the age functions are non-decreasing and unbounded. Consequently, the long-term average cost may never be bounded, no matter how we choose the transmission policy. Thus, we will be concerned with the conditions under which an optimal policy exists to achieve bounded average costs.  

Assumption~\ref{assumption:age-functions} gives such conditions, based on which we show the existence of an optimal policy in Theorem~\ref{theorem:existence}. This result reveals that the policy space $\Pi$ can be reduced to a small subset of Markovian (i.e., independent of $S_{1:t-1}$ and $A_{1:t-1}$), stationary, and deterministic policies without losing optimality. Moreover, since the optimal policy depends only on the current observation $S_t$, all previous measurements can be discarded, thereby saving sensor memory.

\begin{assumption}\label{assumption:age-functions}
    The age functions, source pattern, and channel condition satisfy the following convergence conditions:
    \begin{align}
        \lim_{\delta\rightarrow\infty}\frac{g_{i,j}(\delta+1)}{g_{i,j}(\delta)}< \frac{1}{Q_{i, i}p_f}, ~\forall i\in\set{X}_{\rm{ap}}, i\neq j,\label{eq:existence-condition}
    \end{align}
    where $Q_{i, i}p_f$ represents the probability of remaining in error $(i,j)$ after each transmission attempt.
\end{assumption}

\begin{theorem}\label{theorem:existence}
    Suppose Assumption~\ref{assumption:age-functions} holds. Then there exists an optimal stationary deterministic policy $\pi^*$ such that for every $s\in\set{S}$, $a = \pi^*(s)$ solves the following Bellman equation:
    \begin{align}
        \set{L}^* + h(s) = \min_{a}\{l(s, a)+\sum\nolimits_{s^\prime}P_{s,s^\prime}(a)h(s^\prime)\},\label{eq:bellman-equation}
    \end{align}
    where $h$ is a bounded function, $\set{L}^*$ is the minimal average cost independent of the initial state $s_1$.
\end{theorem}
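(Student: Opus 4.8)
\emph{Proof plan.} The plan is to prove Theorem~\ref{theorem:existence} by the classical \emph{vanishing-discount} argument, using the always-transmit policy $\pi^{\mathrm{AT}}$ (i.e.\ $A_t=1$ for all $t$) as a reference and letting Lemma~\ref{lemma:recurrence} and Assumption~\ref{assumption:age-functions} do the work of taming the unbounded per-stage costs. For a discount factor $\alpha\in(0,1)$, let $V_\alpha$ be the $\alpha$-discounted optimal value function and fix a reference state $z=(m,m,0)\in\set{S}_0$. It suffices to verify the two standard conditions guaranteeing an average-cost optimal stationary deterministic policy with an optimality equation: (S1) $\sup_{\alpha<1}(1-\alpha)V_\alpha(z)<\infty$; and (S2) $\sup_{\alpha<1}\bigl(V_\alpha(s)-V_\alpha(z)\bigr)<\infty$ for every $s\in\set{S}$, together with a uniform one-sided lower bound $V_\alpha(s)-V_\alpha(z)\ge -N$ so that the limiting relation is the \emph{equation}~\eqref{eq:bellman-equation} rather than only an inequality. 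Granting these, $h(s):=\lim_n\bigl(V_{\alpha_n}(s)-V_{\alpha_n}(z)\bigr)$ along a suitable sequence $\alpha_n\uparrow1$ is finite, solves~\eqref{eq:bellman-equation}, the minimizing action defines $\pi^*$, and $\set{L}^*$ is independent of $s_1$; this is standard once (S1), (S2), and the lower bound hold, since the action set $\set{A}=\{0,1\}$ is finite and no continuity/compactness issue arises~\cite{sennott1998stochastic}.

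\emph{Step 1 (condition (S1)).} The first step is to show $\pi^{\mathrm{AT}}$ has finite long-run average cost. By Lemma~\ref{lemma:recurrence}, the chain induced by $\pi^{\mathrm{AT}}$ is positive recurrent with finite mean return time to $z$, so by the renewal--reward theorem (and an Abelian argument relating $(1-\alpha)V_\alpha^{\pi^{\mathrm{AT}}}(z)$ to the average cost) it is enough to bound the expected cost incurred in one return cycle. That cost is a sum, over the finitely many error excursions occurring in a cycle, of terms of the form $D_{i,j}\sum_{\delta\ge1}(Q_{i,i}p_f)^{\delta-1}g_{i,j}(\delta)$: once the system enters an error $(i,j)$ with $i\in\set{X}_\textrm{ap}$ it stays there one more slot only if the source self-transitions \emph{and} the transmission fails, which happens with probability $Q_{i,i}p_f$ per slot (for $i\in\set{X}_\textrm{p}$ the age never exceeds $1$, so those terms are trivially finite). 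The ratio test shows this series converges exactly when $\lim_{\delta\to\infty}g_{i,j}(\delta+1)/g_{i,j}(\delta)<1/(Q_{i,i}p_f)$, i.e.\ precisely under Assumption~\ref{assumption:age-functions}; combined with the finite mean number of excursions per cycle (Lemma~\ref{lemma:recurrence}), the per-cycle expected cost is finite, hence $\set{L}(\pi^{\mathrm{AT}})<\infty$ and $(1-\alpha)V_\alpha(z)\le(1-\alpha)V_\alpha^{\pi^{\mathrm{AT}}}(z)$ is bounded.

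\emph{Step 2 (condition (S2) and the lower bound).} For the upper bound I would compare the $\alpha$-optimal behaviour from $z$ with the suboptimal policy ``run $\pi^{\mathrm{AT}}$ until hitting $z$, then act $\alpha$-optimally''; using $V_\alpha\ge0$ and discount factors $\le1$, this yields $V_\alpha(s)-V_\alpha(z)\le M(s):=\mathbb{E}^{\pi^{\mathrm{AT}}}_s\bigl[\sum_{t=0}^{\sigma_z-1}l(S_t,1)\bigr]$, where $\sigma_z$ is the first passage time to $z$. Solving the one-step recursion for $M$ gives $M\bigl((i,j,\delta)\bigr)=D_{i,j}\sum_{n\ge0}(Q_{i,i}p_f)^n g_{i,j}(\delta+n)+\mathrm{const}$, finite for every $\delta$ by the same ratio-test argument, which also gives $\sum_{s'}P_{s,s'}(a)M(s')<\infty$. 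For the uniform lower bound, the key observation is that the $(X_t,\hat{X}_t)$-dynamics do not depend on $\Delta_t$, so under \emph{any} policy the system reaches the synced set $\set{S}_0$ in expected time bounded by a constant $T^*<\infty$ uniformly in the starting state and the policy (the source self-transitions infinitely often and $Q$ is irreducible); combining this with $V_\alpha\ge0$, with the fact that the finitely many synced states have $V_\alpha$-values within a fixed constant of $V_\alpha(z)$ (again via $M$), and with (S1), a first-passage decomposition gives $V_\alpha(s)-V_\alpha(z)\ge -N$ for a constant $N$ independent of $\alpha$ and $s$.

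\emph{Main obstacle.} The substantive content is concentrated in Step~1 and the $M$-recursion in Step~2: Assumption~\ref{assumption:age-functions} is tailored precisely so that each geometrically weighted age series $\sum_\delta (Q_{i,i}p_f)^\delta g_{i,j}(\delta)$ converges, and the remaining difficulty is organizational --- keeping careful track of how the $M(M-1)$ coupled error-age processes feed into one another through the synced states, so that the per-cycle cost and the functions $M(\cdot)$ are genuinely finite rather than merely formally defined. The passage to the limit $\alpha\uparrow1$, the finiteness of each $V_\alpha$, and the conversion of the optimality inequality into the equation~\eqref{eq:bellman-equation} are then routine, given the finite action set.
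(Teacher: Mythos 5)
Your proposal's analytic core is the same as the paper's: both take the always-transmit policy as the reference policy, use Lemma~\ref{lemma:recurrence} for positive recurrence and finite first-passage times, and reduce everything to the convergence of the geometrically weighted age series $\sum_{n\ge 0}(Q_{i,i}p_f)^n g_{i,j}(\delta+n)$ via the ratio test, which is exactly where Assumption~\ref{assumption:age-functions} enters; your $M(s)$ is precisely the paper's expected first-passage cost $\bar M_{i,j}$. The difference is packaging: the paper verifies the two hypotheses of \cite[Corollary~7.5.10]{sennott1998stochastic} --- the cost-structure condition that $\{s: l(s,a)\le U \text{ for some } a\}$ is finite for every $U$, and the existence of a $z$-standard policy --- and lets that corollary carry out the vanishing-discount limit, whereas you run the vanishing-discount argument by hand. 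That is a legitimate route, but it transfers to you the obligation of proving the uniform lower bound (your (S3)), which the paper never has to argue explicitly because it is delivered by the cited corollary from the cost-structure condition (a condition your proposal never uses) together with the $z$-standard policy.

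This is where you have a genuine gap. Your justification of (S3) rests on the claim that under \emph{any} policy the synced set $\set{S}_0$ is reached in expected time bounded by a constant $T^*$ uniformly in the starting state and the policy, ``because the source self-transitions infinitely often and $Q$ is irreducible.'' This does not follow as stated: the sync event is $X_t=\hat X_t$, and $\hat X_t$ is not frozen --- it jumps to the previous source state whenever a transmission succeeds --- so reaching $\set{S}_0$ is not a hitting time of the exogenous source alone, and the policy can actively postpone synchronization. The claim is genuinely delicate: for a two-state source with $Q_{1,2}=Q_{2,1}=1$ and a perfect channel, the always-transmit policy keeps the system unsynced forever, which shows that any correct proof must exploit $\set{X}_{\rm ap}\neq\emptyset$ and control the interaction between transmission decisions, channel outcomes, and sync events; your sketch does neither. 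Two repairs are available: (i) prove the uniform hitting bound properly, e.g.\ by a block argument distinguishing blocks with no successful transmission (estimate frozen, the irreducible source hits it with probability bounded below) from blocks with a successful transmission while the source sits on a self-loop state; or (ii) follow the paper and invoke the packaged result, verifying its cost-structure hypothesis instead. Relatedly, note that conditions (S1)--(S3) in the bare vanishing-discount framework generally yield the average-cost optimality \emph{inequality}; upgrading to the equation \eqref{eq:bellman-equation} with the stated properties of $h$ is part of what the cited corollary supplies, so if you keep the hand-rolled route you must also address that final step rather than call it routine.
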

\begin{proof}
    See Appendix~\ref{proof:existence}.
\end{proof}

\begin{remark}
    The conditions in Assumption~\ref{assumption:age-functions} are relatively relaxed if the source evolves rapidly and the channel condition is good, i.e., when $Q_{i, i}$ and $p_f$ are small. The intuition behind this is that frequent state changes in the source lead to more frequent error variations (age drops), and good channel conditions increase the likelihood of successful transmissions. Note that in the remote estimation of linear Gaussian processes, the system is easier to stabilize if the source evolves slowly\cite{Dey-TAC-2017, WuShuang-TCNS-2020}. This notable contradiction arises because the error covariance of linear systems increases monotonically with time and only drops when a new measurement is received. From this perspective, the Markovian nature helps resist lasting impact.
\end{remark}

\begin{remark}
    The convergence condition depends on the source pattern and how we define the ``age". If the age is defined as the time elapsed since the last data reception (i.e., AoI), then the convergence condition is\footnote{The proofs of the convergence conditions for AoI and AoII are similar as in Theorem~\ref{theorem:existence} and are thus omitted for brevity. Notice that only one convergence condition is imposed for AoI and AoII since they treat different errors equally.}
    \begin{align}
        \lim_{\delta\rightarrow\infty}\frac{g(\delta+1)}{g(\delta)}< \frac{1}{p_f}.\label{eq:condition-AoI}
    \end{align}
    We note that \eqref{eq:condition-AoI} is independent of the source statistics. This implies that AoI is inefficient for Markovian sources.
    
    If AoII is considered, the condition becomes
    \begin{align}
        \lim_{\delta\rightarrow\infty}\frac{g(\delta+1)}{g(\delta)}< \frac{1}{1 - \min_{i\neq j}\{Q_{i, j}p_f + Q_{i, i}p_s\}}.
    \end{align}
    Given that
    \begin{align}
        \frac{1}{\max_i\{Q_{i, i}\}p_f} \hspace{-0.2em} > \hspace{-0.2em} \frac{1}{1 - \min_{i\neq j}\{Q_{i, j}p_f + Q_{i, i}p_s\}} \hspace{-0.2em} > \hspace{-0.2em} \frac{1}{p_f},
    \end{align}
    it follows that the more information we utilize in defining the ``age", the more relaxed the restrictions on age functions will be.
\end{remark}

As a result, we provide several age functions and derive their convergence conditions. It reveals that an optimal policy trivially exists when the age functions are bounded, linear, or logarithmic; it also holds for exponential functions with rate constraints. Similar analyses can be applied to other age metrics, e.g., \cite{George-GCW-2019, Maatouk-TON-AoII-2020, salimnejad2024version, luo2024MobiHoc}.

\begin{corollary}\label{corollary:age-functions}
    The assertion of Theorem~\ref{theorem:existence} holds if:
    \begin{itemize}
        \item[i.] $g_{i,j}(\delta)$ is an upper bounded clipping function, i.e.,
    \begin{align}
        g_{i,j}(\delta) = \begin{cases}
            g_{i,j}(\delta), &\text{if}~\delta < \Delta_{\max},\\
            g_{i,j}(\Delta_{\max}), &\text{if}~\delta \geq \Delta_{\max},
        \end{cases}\notag
    \end{align}
    where $\Delta_{\max}$ is a finite constant. 
    \item[ii.] $g_{i,j}(\delta) = \alpha \delta + \beta$ is linear, where $\alpha \geq 0$.
    \item[iii.] $g_{i,j}(\delta) = \log(\alpha \delta) + \beta$ is logarithmic, where $\alpha\geq 0$.
    \item[iv.] $g_{i,j}(\delta) = \alpha^{\beta \delta} + \zeta$ is an exponential function, where the base and power satisfy $\alpha^\beta < 1/ (Q_{i, i}p_f)$. For example, if $g_{i,j}(\delta) = e^{\beta \delta} + \zeta$, we impose $0\leq \beta < -\ln{(Q_{i,i}p_f)}$; if $g_{i,j}(\delta) = \alpha^{\delta} + \zeta$, we impose $1 \leq \alpha < 1/ (Q_{i,i}p_f)$.
    \end{itemize}
\end{corollary}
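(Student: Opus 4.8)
The plan is to verify, for each of the four families of age functions, that Assumption~\ref{assumption:age-functions} holds, and then invoke Theorem~\ref{theorem:existence} directly. So the proof reduces to computing (or bounding) the limit $\lim_{\delta\to\infty} g_{i,j}(\delta+1)/g_{i,j}(\delta)$ and comparing it to $1/(Q_{i,i}p_f)$ for every $i\in\set{X}_{\rm ap}$, $i\neq j$. Note that $Q_{i,i}p_f\in(0,1)$ under our standing assumptions (since $i\in\set{X}_{\rm ap}$ means $Q_{i,i}>0$, and $p_f<1$ since $p_s>0$ — if $p_f=0$ the condition is vacuous since the right side is $+\infty$), so $1/(Q_{i,i}p_f)>1$, which is what makes cases (i)--(iii) go through for free.

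First I would handle the three ``easy'' cases, where the ratio limit is exactly $1$. For (i), the clipping function is eventually constant at $g_{i,j}(\Delta_{\max})$; if that constant is $0$ the cost function is identically zero and there is nothing to prove, otherwise the ratio is eventually $1$. For (ii), $g_{i,j}(\delta+1)/g_{i,j}(\delta) = (\alpha\delta+\alpha+\beta)/(\alpha\delta+\beta)\to 1$ (when $\alpha=0$ it is constant, and one should note $\beta\geq 0$ is needed for non-negativity, but in any case the limit is $1$). For (iii), $(\log(\alpha(\delta+1))+\beta)/(\log(\alpha\delta)+\beta)\to 1$ as $\delta\to\infty$ since $\log(\alpha(\delta+1)) = \log(\alpha\delta)+\log(1+1/\delta)$ and $\log(\alpha\delta)\to\infty$. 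In all three cases the limit $1$ is strictly less than $1/(Q_{i,i}p_f)$, so Assumption~\ref{assumption:age-functions} is satisfied and Theorem~\ref{theorem:existence} applies.

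Next I would treat the exponential case (iv), which is the only one where the ratio limit is not $1$. Here $g_{i,j}(\delta) = \alpha^{\beta\delta}+\zeta$, so
\begin{align}
    \frac{g_{i,j}(\delta+1)}{g_{i,j}(\delta)} = \frac{\alpha^{\beta(\delta+1)}+\zeta}{\alpha^{\beta\delta}+\zeta} = \frac{\alpha^{\beta\delta}\alpha^{\beta}+\zeta}{\alpha^{\beta\delta}+\zeta}\;\longrightarrow\; \alpha^{\beta}
\end{align}
as $\delta\to\infty$, provided $\alpha^{\beta}\geq 1$ (so that $\alpha^{\beta\delta}\to\infty$ and the additive constant $\zeta$ washes out); the hypothesis $1\leq \alpha^{\beta} < 1/(Q_{i,i}p_f)$ then gives exactly the required strict inequality, and the two displayed specializations ($g=e^{\beta\delta}+\zeta$ with $0\leq\beta<-\ln(Q_{i,i}p_f)$, and $g=\alpha^\delta+\zeta$ with $1\leq\alpha<1/(Q_{i,i}p_f)$) are obtained by taking logarithms of $\alpha^\beta<1/(Q_{i,i}p_f)$. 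Since the inequality must hold for each $i\in\set{X}_{\rm ap}$, one either reads it per-error or imposes the uniform bound with $\max_{i}Q_{i,i}$ in place of $Q_{i,i}$; I would remark on this.

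I do not anticipate a genuine obstacle here — the result is essentially a catalogue of limit computations — but the one point requiring a little care is making sure the additive constants ($\beta$ in the linear/log cases, $\zeta$ in the exponential case) are handled correctly: they are harmless for the ratio limit precisely because the dominant term diverges, but one must confirm that the resulting $g_{i,j}$ is still non-negative and non-decreasing as required by \eqref{eq:cost-function} (e.g.\ in (iii) one needs $\alpha\delta\geq 1$ on the relevant domain, and in (ii)--(iv) the constants should be chosen so that $g_{i,j}\geq 0$). A secondary subtlety is the degenerate sub-case where the age function is eventually $0$ or identically $0$ (possible in (i), or in (ii) with $\alpha=\beta=0$): then $\bar d\cdot g_{i,j}\equiv 0$ on $\set{S}_{i,j}$, the ratio in \eqref{eq:existence-condition} is a $0/0$ form, but the conclusion is immediate since that error contributes no cost. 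I would dispose of this case in one sentence before carrying out the limit computations.
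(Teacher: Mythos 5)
Your proposal is correct and follows exactly the route the paper intends: the corollary is presented as an immediate consequence of checking the convergence condition \eqref{eq:existence-condition} in Assumption~\ref{assumption:age-functions} for each family of age functions (limit ratio $1$ for the clipped, linear, and logarithmic cases, and $\alpha^\beta$ for the exponential case) and then invoking Theorem~\ref{theorem:existence}. Your added care about degenerate zero-cost cases and about non-negativity/monotonicity of the constants is a reasonable tidying-up but does not change the argument.
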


\subsection{Structure of an Optimal Policy}
Next, we show that the optimal policy has a switching structure. This result significantly remedies the ``curse of memory" and the ``curse of dimensionality." Moreover, under Assumption~\ref{assumption:source}, we show that the optimal policy degenerates into a threshold policy, i.e., featuring identical thresholds.

\begin{theorem}\label{theorem:structure}
    The optimal policy, if it exists, exhibits a switching structure. That is, for any given estimation error $(i, j)$ with an age $\delta \geq 1$, the sensor initiates a transmission only when $\delta$ exceeds a fixed threshold $\tau_{i,j}^* \geq 1$. Formally, 
    \begin{align}
        \pi^*(s) = \begin{cases} 
            1, & {\rm{if}}~\delta \geq \tau_{i,j}^*,i\neq j,\\ 
            0, & {\rm{otherwise}},
        \end{cases}\label{eq:switching-policy}
    \end{align}
    where $\tau_{i,j}^* = 1$ means always transmitting in estimation error $(i, j)$, whereas $\tau_{i,j}^* \rightarrow \infty$ means no transmission. 
\end{theorem}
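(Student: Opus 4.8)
The plan is to establish the switching (monotone) structure by showing that, for each fixed error $(i,j)$ with $i \neq j$, $i \in \set{X}_{\rm ap}$, the optimal action along the ``age axis'' $\set{S}_{i,j} = \{(i,j,\delta): \delta \geq 1\}$ is monotone nondecreasing in $\delta$. Since the action space is binary, monotonicity of $\pi^*$ in $\delta$ is exactly equivalent to the existence of a threshold $\tau^*_{i,j}$. The standard tool is to argue that the relative value function $h$ from the Bellman equation~\eqref{eq:bellman-equation} is monotone nondecreasing in $\delta$ on each $\set{S}_{i,j}$, and then to compare the two $Q$-factors $Q(s,0)$ and $Q(s,1)$ at $s = (i,j,\delta)$ and show their difference is nonincreasing in $\delta$.

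First I would fix $(i,j)$ and define, for $\delta \geq 1$, $V_0(\delta) := l((i,j,\delta),0) + \sum_{s'} P_{(i,j,\delta),s'}(0)\, h(s')$ and $V_1(\delta) := l((i,j,\delta),1) + \sum_{s'} P_{(i,j,\delta),s'}(1)\, h(s')$, using the explicit transition laws~\eqref{eq:prob-error-0}--\eqref{eq:prob-error-1}. The key observation is that the only $s'$ appearing in these sums whose value depends on $\delta$ is $(i,j,\delta+1)$; all the ``exit'' targets $(j,j,0)$ and $(k,j,1)$ (for $a=0$), and additionally $(i,i,0)$ and $(k_1,i,1)$ (for $a=1$), are independent of $\delta$. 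Hence
\begin{align}
    V_0(\delta) - V_1(\delta) = \bigl[\bar d(i,j) g_{i,j}(\delta)\bigr]\cdot 0 - \lambda + Q_{i,i}(1-p_f)\, h(i,j,\delta+1) - p_s\,\Bigl[Q_{i,i} h(i,i,0) + \sum\nolimits_{k_1 \neq i} Q_{i,k_1} h(k_1,i,1)\Bigr].
\end{align}
Wait — more carefully, $V_0(\delta)-V_1(\delta) = -\lambda + Q_{i,i}p_s\bigl[h(i,j,\delta+1) - h(i,i,0)\bigr] - p_s\sum_{k_1\neq i}Q_{i,k_1}h(k_1,i,1) + (\text{terms independent of }\delta)$, so the $\delta$-dependence is carried entirely by the single increasing-in-$\delta$ term $Q_{i,i}p_s\, h(i,j,\delta+1)$. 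Therefore, once I show $h(i,j,\delta)$ is nondecreasing in $\delta$ along $\set{S}_{i,j}$, the gap $V_0(\delta)-V_1(\delta)$ is nondecreasing in $\delta$; combined with $\pi^*(s) = \argmin_a V_a(\delta)$ and the tie-breaking convention, this yields: if transmitting is optimal at age $\delta$, it is optimal at age $\delta+1$. That is precisely~\eqref{eq:switching-policy}, with $\tau^*_{i,j} := \inf\{\delta \geq 1 : V_1(\delta) \leq V_0(\delta)\}$ (set to $\infty$ if the set is empty, corresponding to never transmitting). When $i \in \set{X}_{\rm p}$ the set $\set{S}_{i,j}$ is a singleton and the claim is vacuous (equivalently $\tau^*_{i,j}=1$).

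The main obstacle is establishing monotonicity of $h$ in $\delta$. I would get it from the value-iteration (or relative value iteration) scheme used to prove Theorem~\ref{theorem:existence}: start from $h^{(0)} \equiv 0$, which is trivially nondecreasing in $\delta$ on each $\set{S}_{i,j}$, and show this property is preserved under the Bellman operator. The inductive step requires: (a) the per-stage cost $l((i,j,\delta),a) = \bar d(i,j) g_{i,j}(\delta) + \lambda\mathds{1}\{a\neq 0\}$ is nondecreasing in $\delta$ because $g_{i,j}$ is nondecreasing; (b) a stochastic-dominance / coupling argument that the post-decision age distribution from $(i,j,\delta)$ is stochastically larger (in the partial order that respects this per-error structure) than that from $(i,j,\delta-1)$ — this holds because from~\eqref{eq:prob-error-1} the only difference is that the ``stay'' target moves from $(i,j,\delta)$ to $(i,j,\delta+1)$ while all exit probabilities and exit targets are unchanged; and (c) the fact that $h^{(n)}$ restricted to the ``interface'' states $\set{S}_0$ and the age-$1$ states does not interfere, since transitions out of $\set{S}_{i,j}$ always land at age $0$ or age $1$, which are the minimal elements. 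Passing to the limit $h^{(n)} \to h$ (justified by the bounded-$h$ conclusion of Theorem~\ref{theorem:existence}) preserves monotonicity. For the degeneration to a common threshold under Assumption~\ref{assumption:source}: by Lemma~\ref{lemma:sufficient-statistic} the MDP collapses to one with state $\Delta_t$ alone, all $\set{S}_{i,j}$ become copies of a single age chain with identical costs $D g(\delta)$ and identical transition structure (using~\eqref{eq:symmetric-source}), so the threshold computed above is independent of $(i,j)$, giving $\tau^*_{i,j} \equiv \tau^*$.
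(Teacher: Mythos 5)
Your proposal is essentially the paper's own argument: the monotone gap $V_0(\delta)-V_1(\delta)$, whose $\delta$-dependence reduces to the single term $Q_{i,i}p_s\,h(i,j,\delta+1)$, is exactly the submodularity computation in the paper's Appendix, and your inductive preservation of monotonicity of $h$ under (relative) value iteration mirrors the paper's Theorem on the monotone relative value function. The only piece you leave implicit is the trivial observation from \eqref{eq:prob-synced-states} that transmitting in a synced state $i=j$ changes nothing but adds $\lambda$, which gives $\pi^*(s)=0$ there and completes the stated form of \eqref{eq:switching-policy}.
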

\begin{proof}
    See Appendix~\ref{proof:structure}.
\end{proof}

Combining Lemma~\ref{lemma:sufficient-statistic} and Theorem~\ref{theorem:structure} gives the following:
\begin{corollary}\label{corollary:threshold-policy}
    Under Assumption~\ref{assumption:source}, the optimal policy degenerates to a simple threshold policy, i.e., $\tau_{i,j}^* = \tau^*, \forall i\neq j$. 
\end{corollary}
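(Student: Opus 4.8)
The plan is to combine the two previously established results in the natural way. By Lemma~\ref{lemma:sufficient-statistic}, under Assumption~\ref{assumption:source} the AoCE $\Delta_t$ is a sufficient statistic for the MDP, so there is no loss of optimality in restricting to transmission rules of the form $A_t = \pi_t(\Delta_t)$. By Theorem~\ref{theorem:existence} (whose hypothesis Assumption~\ref{assumption:age-functions} is implied by Assumption~\ref{assumption:source} together with the standing assumption that the common age function $g$ satisfies the convergence condition, or alternatively is itself an additional standing hypothesis), an optimal stationary deterministic policy exists, and by Theorem~\ref{theorem:structure} it has a switching structure with thresholds $\tau^*_{i,j}$ for each error $(i,j)$. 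The goal is to show that under the symmetry of Assumption~\ref{assumption:source}, these thresholds collapse to a single common value $\tau^*$.

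First I would argue by symmetry of the Bellman equation~\eqref{eq:bellman-equation}. Under Assumption~\ref{assumption:source}, all errors $(i,j)$ with $i\neq j$ are interchangeable: they share the same distortion weight $D$, the same age function $g$, the same self-transition probability $Q_{i,i}=\bar p$, and the same off-diagonal probabilities $p$. Concretely, I would define the relabeling map on the state space that permutes error classes while fixing the age coordinate, and show that the transition kernel $P$ and the per-stage cost $l$ in~\eqref{eq:total-cost} are invariant under this map. Hence the relative value function $h$ in~\eqref{eq:bellman-equation} can be taken to depend only on whether the system is synced and on the age $\delta$ — i.e., $h(i,j,\delta) = \tilde h(\delta)$ for all $i\neq j$ and $h(i,i,0) = \tilde h(0)$ — because if $h$ solves the Bellman equation, so does its symmetrization, and the bounded solution is essentially unique up to an additive constant. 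Once $h$ depends only on $\delta$, the minimizing action in~\eqref{eq:bellman-equation} at state $(i,j,\delta)$ is determined by a comparison that involves only $\delta$, $g$, $\bar p$, $p$, $p_s$, $p_f$, and $\lambda$ — none of which depend on the identity of $(i,j)$ — so the optimal action is the same for all errors at a given age. Combined with the switching structure from Theorem~\ref{theorem:structure}, this forces $\tau^*_{i,j} = \tau^*$ for all $i\neq j$.

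An alternative, perhaps cleaner, route is to invoke Lemma~\ref{lemma:sufficient-statistic} more directly: since $\Delta_t$ alone is a sufficient statistic, the MDP reduces to a single-dimensional countable-state MDP on the age $\Delta_t\in\{0,1,2,\ldots\}$ with a scalar cost $D\cdot g(\delta)$ and a scalar transition structure; Theorem~\ref{theorem:structure} applied to this reduced chain yields a single threshold, and lifting back to the original state space gives identical thresholds across all error classes. I would present whichever of these two arguments is shorter given the setup already in the appendices, most likely the symmetrization-of-$h$ argument since the proof of Theorem~\ref{theorem:structure} already works with the Bellman equation.

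The main obstacle I anticipate is the uniqueness/symmetrization step: one must be careful that the bounded solution $h$ to the average-cost Bellman equation is unique up to an additive constant (which holds here because, by Lemma~\ref{lemma:recurrence}, the always-transmit chain is irreducible and positive recurrent with finite expected first passage times, giving a unichain structure), and then verify that averaging $h$ over the symmetry group still yields a bounded solution. A secondary subtlety is ensuring the permutation genuinely maps $\set{S}$ to itself — in particular that the partition into $\set{X}_{\rm ap}$ and $\set{X}_{\rm p}$ is respected; but under Assumption~\ref{assumption:source} every state has self-transition probability $\bar p>0$, so $\set{X}_{\rm p}=\emptyset$ and $\set{S}_{i,j}=\{(i,j,\delta):\delta\geq 1\}$ for every ordered pair $i\neq j$, making the relabeling well-defined. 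With these points handled, the corollary follows immediately.
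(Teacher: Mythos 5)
Your second (``alternative'') route is essentially the paper's own proof: under Assumption~\ref{assumption:source}, Lemma~\ref{lemma:sufficient-statistic} makes the AoCE $\Delta_t$ a sufficient statistic, and the paper then reruns the Theorem~\ref{theorem:structure} machinery on the reduced one-dimensional age MDP with kernel \eqref{eq:age-symmetric-0}--\eqref{eq:age-symmetric-1}, writing out $f(s,0)$ and $f(s,1)$ and showing the submodularity margin equals $-\bar{p}p_s\bigl(h(\delta^\prime+1)-h(\delta+1)\bigr)$, with $h(\delta)$ non-decreasing by the same induction as in Theorem~\ref{theorem:monotone-h}; a single threshold in $\delta$ then follows, which is the common $\tau^*$. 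Your preferred symmetrization-of-$h$ route is plausible but carries a burden the paper never discharges: for a countably infinite average-cost MDP, existence of a bounded solution $(\set{L}^*,h)$ via Sennott's conditions (Theorem~\ref{theorem:existence}) does not by itself give uniqueness of $h$ up to an additive constant, and you cannot simply average solutions because the Bellman operator involves a minimum and is nonlinear. The argument you would actually need is: invariance of $P$ and $l$ under the permutation implies $h\circ\sigma$ also solves \eqref{eq:bellman-equation}; uniqueness up to a constant gives $h\circ\sigma=h+c_\sigma$; finite order of $\sigma$ plus boundedness forces $c_\sigma=0$, hence $h(i,j,\delta)=\tilde{h}(\delta)$. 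The middle link (uniqueness) is exactly the unproved step, and appealing to Lemma~\ref{lemma:recurrence} and ``unichain structure'' does not settle it in the countable-state average-cost setting without additional work. The reduction route buys you freedom from this issue entirely---the symmetry is absorbed into the sufficiency statement of Lemma~\ref{lemma:sufficient-statistic}---at the modest price of re-verifying submodularity on the reduced chain, which is two lines. Your observation that under Assumption~\ref{assumption:source} (together with the standing assumption $\set{X}_{\rm ap}\neq\emptyset$, hence $\bar{p}>0$) one has $\set{X}_{\rm p}=\emptyset$, so every $\set{S}_{i,j}$ is the full age ray and the relabeling is well defined, is a correct and worthwhile sanity check under either route.
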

\begin{proof}
    See Appendix~\ref{proof:threshold-policy}.
\end{proof}

\begin{remark}
    The nice property of the switching policy \eqref{eq:switching-policy} is its simplicity in implementation and computation: i) to embed the optimal policy in the sensor memory, one only needs to store at most $M(M-1)$ threshold values instead of all possible state-action pairs; ii) to implement the policy in a real-time sensor scheduler, trigger a transmission only when the age of error exceeds the threshold and remain silent otherwise; iii) to compute the optimal policy, it suffices to search for a small number of optimal thresholds as opposed to solving a high-dimensional dynamic programming recursion.
\end{remark}

\begin{remark}
    Theorem~\ref{theorem:structure} answers the fundamental question of ``what and when to transmit". Distortion-optimal policies specify a deterministic mapping from estimation error to transmission decision\cite{luo2025TCOM}, guiding us on ``whether to transmit when a certain error occurs". By exploiting the lasting impact, our approach further determines the optimal timing to initiate a transmission for each estimation error. Therefore, existing results on distortion can be viewed as special cases of our framework where the thresholds are restricted to $\tau_{i,j}^* = 1$ (always-transmit) or $\tau_{i,j}^* \rightarrow \infty$ (non-transmit). An important takeaway is that the more semantic attributes we utilize, the fewer transmissions are needed.
\end{remark}

\vspace{-0.2in}
\subsection{Problem Approximation and Asymptotic Optimality}
We have shown several convenient properties of the optimal policy. However, it is challenging to analytically calculate the switching policy for general source patterns and age functions because: i) Computing the stationary distribution $\mu^\pi$ of the underlying system induced by a switching policy $\pi$ is a non-trivial task. ii) The average cost $\set{L}(\pi) = \sum_{s\in\set{S}}\mu^\pi(s)l(s,\pi(s))$ may not have closed-form expressions for general cases. iii) Even with such results at hand, $\set{L}(\pi)$ is a high dimensional and non-convex function, which does not translate into analytical expressions for the optimal thresholds\cite{luo2024MobiHoc}.

Thus, we resort to numerical methods to compute the optimal policy. Although Theorem~\ref{theorem:structure} considerably reduces the policy searching space, finding the optimal thresholds is still challenging as we cannot iterate over infinitely many states. To make the problem numerically tractable, we truncate the age process and propose a finite-state approximate MDP. The truncated AoCE is defined as
\begin{align}
    \Delta_{t}(N) \triangleq \min\{\Delta_t, N\}\label{eq:truncated-aoce},
\end{align}
where $\Delta_t$ is the original age process defined in \eqref{eq:aoce}, $N>0$ is the truncation size. In this way, the age associated with each estimation error $(i, j)$ is confined within
\begin{align}
    \set{S}_{i,j}^N  = \begin{cases}
        \{(i, j, \delta): 1\leq \delta\leq N\}, &\text{if}~i\neq j, i\in\set{X}_\textrm{ap},\\
        \{(i, j, 1)\}, &\text{if}~i\neq j, i\in\set{X}_\textrm{p}.\\
    \end{cases}
\end{align}
The truncated state space is $\set{S}^N = \set{S}_0 \bigcup_{i\neq j}\set{S}^N_{i,j}$.

The state space truncation, however, inevitably changes the stationary distribution of the induced Markov chain, yielding inconsistent system performance. Consequently, an optimal policy for the truncated MDP\footnote{One can easily show that the optimal policy for the truncated MDP is a switching type as well, according to the same proof as Theorem~\ref{theorem:structure}.} may be suboptimal for the original problem. Generally, there is no guarantee that the truncated MDP will converge to the original one as $N$ approaches infinity\cite{sennott1998stochastic}. Thus, we will be concerned with the performance loss caused by state space truncation. Fortunately, the following result establishes the asymptotic optimality of the truncated MDP, and \emph{we shall feel safe to truncate the AoCE with an appropriately chosen $N$.}

\begin{theorem}\label{therem:asymptotical-optimality}
    Suppose Assumption~\ref{assumption:age-functions} holds. Let $\set{L}^*(N)$ denote the minimal cost of the truncated MDP. Then, $\set{L}^*(N)$ converges to the optimal value $\set{L}^*$, i.e., $\lim_{N\rightarrow\infty}\set{L}^*(N) \rightarrow \set{L}^*$, at an exponential rate of $\max_{i\in\set{X}_{\rm{ap}}}\{(Q_{i,i}p_f)^N\}$.
\end{theorem}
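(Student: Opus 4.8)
The plan is to couple the original MDP $(\set{S},\set{A},P,l)$ with its truncation $(\set{S}^N,\set{A},P^N,l^N)$ through the optimal switching policy of the truncated chain, and to control the discrepancy of the induced average costs by bounding the probability that the age process ever reaches the truncation level $N$. First I would fix $N$ large enough that $N \ge \max_{i\ne j}\tau^*_{i,j}$ for the original optimal thresholds (this is possible since Theorem~\ref{theorem:existence} guarantees finite thresholds under Assumption~\ref{assumption:age-functions}, or else we work with the thresholds of the truncated problem and argue symmetrically). On this event the two policies make identical decisions until the age hits $N$, so the two chains can be run on a common probability space and remain coupled until the first time the AoCE of some error $(i,j)$ with $i\in\set{X}_{\rm ap}$ reaches $N$ without a successful transmission or error change intervening. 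By the one-step transition probabilities \eqref{eq:prob-error-1}, starting from age $1$ in error $(i,j)$, the probability of surviving $N-1$ further slots in the same error under the always-transmit action is exactly $(Q_{i,i}p_f)^{N-1}$, and under no-transmit it is $Q_{i,i}^{N-1}\le(Q_{i,i}p_f)^{N-1}\cdot p_f^{-(N-1)}$ — but since the optimal policy transmits once the age is at least $\tau^*_{i,j}$, beyond that threshold the relevant survival factor is $Q_{i,i}p_f<1$. Hence the stationary probability mass that the truncation actually distorts is $O\big(\max_{i\in\set{X}_{\rm ap}}(Q_{i,i}p_f)^{N}\big)$.

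The key steps, in order, are: (1) invoke Lemma~\ref{lemma:recurrence} and Theorem~\ref{theorem:existence} to know that both the original and the truncated chains under their respective optimal (switching) policies are positive recurrent with finite expected return times to the synced set $\set{S}_0$, so regeneration/renewal-reward arguments apply; (2) express $\set{L}^*$ and $\set{L}^*(N)$ as ratios $\mathbb{E}[\text{cost over a regeneration cycle}]/\mathbb{E}[\text{cycle length}]$, where a cycle starts at a visit to $\set{S}_0$; (3) couple the two cycles so that they coincide exactly unless the age in some $\set{X}_{\rm ap}$-error reaches $N$ during the cycle — call this event $E_N$ — and bound $\Pr[E_N]=O(\rho^N)$ with $\rho=\max_{i\in\set{X}_{\rm ap}}Q_{i,i}p_f$ using the geometric survival estimate above together with a uniform bound on the number of distinct error episodes per cycle (finite expectation, again from Lemma~\ref{lemma:recurrence}); (4) on $E_N$, bound the per-cycle cost difference: the cost of the original chain is dominated using Assumption~\ref{assumption:age-functions}, specifically $g_{i,j}(\delta)=O((1/(Q_{i,i}p_f)-\epsilon)^{-\delta})$ fails but rather $g_{i,j}(\delta+1)/g_{i,j}(\delta)\to L_{i,j}<1/(Q_{i,i}p_f)$ implies $\sum_{\delta\ge N}\Pr[\text{age}=\delta]\,g_{i,j}(\delta)$ is a convergent geometric-type tail of order $(L_{i,j}Q_{i,i}p_f)^N$, which is itself $O(\rho^N)$ up to constants; (5) combine: $|\set{L}^*-\set{L}^*(N)|\le$ (difference in numerators)$/$(lower bound on cycle length) $=O(\rho^N)$, and note the truncated optimum cannot be worse than applying the original optimal policy's restriction, giving the two-sided bound.

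I expect the main obstacle to be step (4), i.e., making rigorous that the excess cost contributed by the truncated chain ``clipping'' the age at $N$ — versus the original chain letting it run — is genuinely of exponential order rather than merely vanishing. The subtlety is that $g_{i,j}$ grows (possibly to infinity), so one is multiplying a probability $O((Q_{i,i}p_f)^{\delta})$ of reaching a large age $\delta$ by an age cost $g_{i,j}(\delta)$ that may grow almost as fast as $(Q_{i,i}p_f)^{-\delta}$; the ratio condition in Assumption~\ref{assumption:age-functions} is exactly what makes the product summable, and one must extract from it a clean geometric tail bound uniform over the (finitely many) errors with $i\in\set{X}_{\rm ap}$. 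A secondary technical point is ensuring the coupling is valid when the two problems have \emph{different} optimal thresholds; this is handled by taking $N$ large enough that the truncation lies strictly above all relevant thresholds, so that on the coupled event the policies are literally identical, and the monotonicity/switching structure (Theorem~\ref{theorem:structure}) guarantees the thresholds stabilize as $N\to\infty$. Once the tail estimate of step (4) is in hand, steps (1)–(3) and (5) are routine renewal-reward bookkeeping.
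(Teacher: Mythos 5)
Your proposal is essentially correct, but it takes a genuinely different route for the key estimate. The paper never sets up a coupling or a renewal--reward decomposition: for a \emph{fixed} switching policy $\pi$ with thresholds below $N$ it observes that lumping all outer states $\{(i,j,\delta):\delta\ge N\}$ into the boundary state $(i,j,N)$ is exact, so the stationary probabilities of all inner states are unchanged and $\mu^\prime(s^{\rm b}_{i,j})=\sum_{\delta\ge N}\mu(i,j,\delta)=\mu(s^{\rm b}_{i,j})/(1-Q_{i,i}p_f)$; the gap then has the closed form $\sigma_{i,j}(\pi,N)=\mu(s^{\rm b}_{i,j})D_{i,j}\sum_{k\ge0}(Q_{i,i}p_f)^k\bigl(g_{i,j}(k+N)-g_{i,j}(N)\bigr)$, which is finite by the ratio test under Assumption~\ref{assumption:age-functions} and vanishes exponentially because $\mu(s^{\rm b}_{i,j})\propto(Q_{i,i}p_f)^{N-\tau_{i,j}}$. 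The second step is the same in both arguments: exchange $\pi^*$ and $\pi^*(N)$ between the two problems and squeeze $\set{L}^*(N)$ between $\set{L}^*-\sigma(\pi^*(N),N)$ and $\set{L}^*-\sigma(\pi^*,N)$ (your step (5) is exactly this). Your cycle-coupling machinery would work and is arguably more robust (it does not rely on the exact lumpability of the truncated chain and would cover other truncation schemes), but it is heavier than needed here, and it must still confront the same two caveats the paper handles by fiat: the thresholds of the policies being exchanged must lie below $N$ (and be finite so the extension of $\pi^*(N)$ to the original space has summable cost), and your passing claim that the cost tail of order $(L_{i,j}Q_{i,i}p_f)^N$ is ``$O(\rho^N)$ up to constants'' is not literally true when $g_{i,j}$ grows (e.g.\ exponentially, $L_{i,j}>1$); the correct conclusion is that the tail is exponentially small because $L_{i,j}Q_{i,i}p_f<1$, a looseness that the theorem's stated rate (and the paper's treatment of $B_{\rm max}$ as $N$-independent) shares.
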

\begin{proof}
    See Appendix~\ref{proof:asymptotical-optimality}.
\end{proof}

\vspace{-0.2in}
\subsection{Structured Policy Iteration Algorithm}
Classical unstructured policy iteration methods can be applied to solve the Bellman equation~\eqref{eq:bellman-equation} and obtain an optimal stationary deterministic policy\cite{puterman1994markov, ChenYutao-TON-AoII-2024}. Let $\Pi^\textrm{SD}$ and $\Pi^*$ denote the space of stationary deterministic policies and the space of switching policies, respectively, where $\Pi^*\subset \Pi^\textrm{SD}$. Assume the AoCE comprises $m$ ($m = |\set{X}_\textrm{ap}|(M-1) \leq M(M-1)$) sub age processes (see Remark~\ref{remark:age-process}), each with a truncation size of $N$ ($m\ll N$). Recall that there are $|\set{S}^N| = M^2 - m + mN$ states and $|\set{A}| = 2$ possible actions. The numbers of possible policies in $\Pi^*$ and $\Pi^\textrm{SD}$ are given by
\begin{align}
    |\Pi^*| = N^{m},~|\Pi^\textrm{SD}|  = |\set{A}|^{|\set{S}^N|} = 2^{M^2 - m + mN}\approx2^{mN}.\notag
\end{align}
Taking a logarithm of these numbers yields
\begin{align}
    \log_2(|\Pi^*|) = m \log_2(N) \ll \log_2(|\Pi^\textrm{SD}|) = mN.\notag
\end{align}
Therefore, the reduction in the size of the structured policy space is quite significant compared to the set of all deterministic policies. Moreover, unstructured iteration methods evaluate all possible policies in a stochastic manner, which makes the convergence rate even slower.

Next, we propose a \emph{structured policy iteration} (SPI) algorithm that exploits the switching structure established in Theorem~\ref{theorem:structure}. The algorithm proceeds as follows:
\begin{enumerate}
    \item \textit{Initialization:} Arbitrarily select an initial policy $\pi^n$, a reference state $s_\textrm{ref}$, and set $n=0$. Choose a truncation size $N$ such that $(Q_{i,i}p_f)^N < \epsilon$ for all $i\in\set{X}_\textrm{ap}$, where $\epsilon>0$ is an arbitrarily small constant.
    \item \textit{Policy Evaluation:} Find a scalar $\set{L}^n$ and a vector $h^n$ by solving
    \begin{align}
        \set{L}^n + h^n(s) = l(s, \pi^n(s)) + \hspace{-0.2em}\sum_{s^\prime\in\set{S}^N}P_{s,s^\prime}(\pi^n(s))h^n(s^\prime)\notag
    \end{align}
    for all $s\in\set{S}^N$ such that $h^n(s_\textrm{ref}) = 0$.
    \item \textit{Policy Improvement:} For each error $(i, j)$, update $\pi^{n+1}(i, j, \delta)$ in the increasing order of the AoCE $\delta$:
    \begin{enumerate}
        \item[a.] Initialize $s = (i, j, \delta)$ with $\delta = 1$. 
        \item[b.] Update the policy as
        \begin{align}
            \pi^{n+1}(s)
           \hspace{-0.2em} = \hspace{-0.2em} \argmin_{a\in\set{A}} \bigg[l(s, a) + \hspace{-0.5em}\sum_{s^\prime\in\set{S}^N}\hspace{-0.5em}P_{s,s^\prime}(a)h^n(s^\prime)\bigg].\notag
        \end{align}
        \item[c.] If $\pi^{n+1}(s) = 1$, the optimal action for all subsequent states $s^\prime = (i,j,\delta^\prime)$ with $\delta^\prime>\delta$ is to transmit without further computation. Thus, set $\pi^{n+1}(s^\prime) = 1$ for all $\delta^\prime\geq \delta$ and proceed to step 3(a) with an unvisited error. Otherwise, increment $\delta$ by setting $s = (i, j, \delta + 1)$ and return to step 3(b).
    \end{enumerate}
    \item \textit{Stopping Criterion:} If $\pi^{n+1} = \pi^n$, the algorithm terminates with $\set{L}^* = \set{L}^n$ and $\pi^* = \pi^n$; otherwise increase $n=n+1$ and return to step~2.
\end{enumerate}

The time complexity of SPI depends on the switching curve of the optimal policy rather than the full state space. The total number of operations in step 3 scales as $\sum_{i\neq j}\tau_{i,j}^*$, which is bounded by $\mathcal{O}(M^2N)$. This is significantly smaller than the $\mathcal{O}(|\mathcal{S}^N|^2|\mathcal{A}|)$ complexity of classical policy iteration. For instance, when $M = N= 10$, SPI requires at most $M^2N = 10^3$ operations, while classical method involves $|\mathcal{S}^N|^2|\mathcal{A}|\approx 2\times 10^6$ operations.

\begin{remark}
    We note that, while it is feasible to obtain an asymptotically optimal policy for general source patterns and age functions through age process truncation, Assumption~\ref{assumption:age-functions} remains a necessary prerequisite for the theorems established earlier. For instance, consider a digital twin that maintains a virtual representation of a physical system. It is essential to ensure that the policy computed in the virtual twin yields bounded average costs when applied to the real system.
\end{remark}

\vspace{-0.1in}
\begin{table}[ht]
\centering
\scriptsize
\setlength{\tabcolsep}{3pt}
\caption{Optimal thresholds for different communication costs $\lambda$ with $p_s = 0.9, N = 20$.}
\vspace{-0.1in}
\label{table:optimal-policy-lambda}
\renewcommand{\arraystretch}{0.8}
\begin{tabular}{cccc|c|c|c}
    \toprule
    \multirow{2.5}{*}{$\lambda$} & \multicolumn{3}{c|}{Switching policy} & Distortion & AoI & AoII \\
    \cmidrule(lr){2-7} 
    & Missed alarms & False alarms & Normal errors & All errors & All states & All errors \\
    \midrule
    0   & 1 & 1       & 1       & 1       & 1       & 1 \\
    1   & 1 & 1       & 1       & 1       & 2       & 1 \\
    2   & 1 & 1       & $\infty$& $\infty$& 2       & 1 \\
    3   & 2 & 3       & $\infty$& $\infty$& 3       & 1 \\
    4   & 3 & 11      & $\infty$& $\infty$& 3       & 1 \\
    5   & 3 & $\infty$& $\infty$& $\infty$& 3       & 2 \\
    6   & 3 & $\infty$& $\infty$& $\infty$& 4       & 2 \\
    7   & 3 & $\infty$& $\infty$& $\infty$& 4       & 3 \\
    \bottomrule
\end{tabular}
\end{table}
\vspace{-0.2in}
\begin{table}[ht]
\centering
\scriptsize
\setlength{\tabcolsep}{3pt}
\caption{Optimal thresholds for different channel conditions $p_s$ with $\lambda = 3, N = 20$.}
\vspace{-0.1in}
\label{table:optimal-policy-ps}
\renewcommand{\arraystretch}{0.8}
\begin{tabular}{cccc|c|c|c}
    \toprule
    \multirow{2.5}{*}{$p_s$} & \multicolumn{3}{c|}{Switching policy} & Distortion & AoI & AoII \\
    \cmidrule(lr){2-7} 
    & Missed alarms & False alarms & Normal errors & All errors & All states & All errors \\
    \midrule
    0.1   & 1    & $\infty$ & $\infty$   & $\infty$ & 4  & 5\\  
    0.2   & 1    & $\infty$ & $\infty$   & $\infty$ & 3  & 3\\ 
    0.3   & 1    & $\infty$ & $\infty$   & $\infty$ & 3  & 2\\ 
    0.4   & 2    & $\infty$ & $\infty$   & $\infty$ & 3  & 2\\ 
    0.5   & 2    & $\infty$ & $\infty$   & $\infty$ & 3  & 1\\ 
    0.6   & 3    & $\infty$ & $\infty$   & $\infty$ & 3  & 1\\ 
    0.7   & 3    & 10       & $\infty$   & $\infty$ & 3  & 1\\ 
    0.8   & 3    & 5        & $\infty$   & $\infty$ & 3  & 1\\ 
    0.9   & 2    & 3        & $\infty$   & $\infty$ & 3  & 1\\ 
    1.0   & 2    & 2        & $\infty$   & $\infty$ & 3  & 1\\ 
    \bottomrule
\end{tabular}
\end{table}

\vspace{-0.2in}
\section{Numerical Results}\label{sec:numerical-results}
In this section, we provide numerical simulations to illustrate our results on the optimal switching policy. 

\textbf{Behavioral analysis:} In the first example, we show the impact of significance-aware non-linear aging on the switching curve. To isolate the effect of other factors, we consider a symmetric source of the form in~\eqref{eq:symmetric-source} with the following parameters: $M = 4, p = 0.1, \bar{p} = 0.7$, and $D_{i,j}=1, \forall i\neq j$. We assign exponential age functions to missed alarms and logarithmic age functions to false alarms, i.e.,
\begin{align}
    g_{i,j}(\delta) = \begin{cases}
        e^{0.3\delta}, &\textrm{if}~i = 1, j\neq 1,\\
        \log(\delta) + 1, &\textrm{if}~i \neq 1, j=1,\\
        1, &\textrm{otherwise.}
    \end{cases}\notag
\end{align}
This assignment satisfies the convergence condition \eqref{eq:existence-condition}. By Theorem~\ref{theorem:structure}, there exists an optimal switching policy.

The optimal thresholds for different communication costs $\lambda$, with a fixed success probability $p_s$, are shown in Table~\ref{table:optimal-policy-lambda}. Similarly, the thresholds for different $p_s$, with a fixed $\lambda$, are presented in Table~\ref{table:optimal-policy-ps}. We use the SPI to find the optimal switching policy. The AoI- and AoII-optimal policies are obtained by classical policy iteration\cite{ChenYutao-TON-AoII-2024}. When communication is costly or the channel condition is relatively poor, the optimal switching policy is to transmit less frequently (or never transmit) in false alarms and normal errors while consistently prioritizing missed alarms. In contrast, the distortion-optimal policy either initiates transmissions for all errors (i.e., $\tau_{i,j} = 1, \forall i\neq j$) when communication is inexpensive ($\lambda\leq 1$) or remains silent otherwise (i.e., $\tau_{i,j} \rightarrow \infty, \forall i\neq j$). Due to source symmetry, the AoI- and AoII-optimal policies specify a single threshold for all errors\cite{Maatouk-TON-AoII-2020}. The AoI metric shows obvious disadvantages as it completely ignores the source pattern and initiates transmissions even in synced states. The AoII metric is also inefficient in our problem since it treats all errors equally, resulting in excessive transmissions for less critical errors. From Table~\ref{table:optimal-policy-ps}, we observe that the optimal threshold for missed alarms does not always increase as the channel quality $p_s$ degrades. The reason is that the sensor must schedule missed alarms more frequently to counteract the exponential growth of the lasting cost caused by the high probability of transmission failures. \emph{Therefore, the significance-aware AoCE offers more informed decisions and extends the current understanding of distortion and information aging.}

\begin{table*}[ht]
\centering
\scriptsize
\setlength{\tabcolsep}{5.5pt}
\caption{Performance comparison of different policies with $p_s = 0.9, N = 20$.}
\vspace{-0.1in}
\label{table:performance-comparison}
\renewcommand{\arraystretch}{0.8}
\begin{tabular}{ccccccccccc}
    \toprule
    \multirow{2.5}{*}{$\lambda$} & \multicolumn{9}{c}{Achievable minimal average costs} \\
    \cmidrule(lr){2-10}
    & Randomized & Periodic & Reactive & Error-triggered & Threshold & Distortion & AoI & AoII & \textbf{Switching} \\
    \midrule
    0  & \textbf{0.31} & \textbf{0.31} & 0.43 & \textbf{0.31}  & \textbf{0.31} &\textbf{0.31}&\textbf{0.31}&\textbf{0.31}& \textbf{0.31} \\
    1  & 0.93  &0.85   &0.72 &0.62 & 0.62 &\textbf{0.59}&0.84  & 0.62  & \textbf{0.59} \\
    2  & 1.46  & 1.05  &0.98 &0.91 & 0.88 &1.32&1.14  &0.91   & \textbf{0.73} \\
    3  & 1.20  & 1.20  &1.27 &1.21 & 0.97 &1.34&1.42  &1.14   & \textbf{0.80} \\
    4  & 1.20  & 1.20  &1.55 &1.50 & 1.01 &1.34&1.65  &1.34   & \textbf{0.85} \\
    5  & 1.20  & 1.20  &1.83 &1.80 & 1.04 &1.34&1.84  &1.34   & \textbf{0.88} \\
    \bottomrule
\end{tabular}
\end{table*}

\textbf{Performance comparisons:} We now consider a general asymmetric source with 
\begin{align}
    Q = \begin{bmatrix}
        0.7 & 0.1 & 0.1 & 0.1\\
        0.05& 0.7 & 0.15& 0.1\\
        0.1 & 0.1 & 0.6 & 0.2\\
        0.05& 0.1 & 0.05& 0.8
    \end{bmatrix}.\notag
\end{align}
The other parameters are the same as in the first example. For comparison purposes, we consider the following rule-based policies\cite{SunYin-TIT-2020, Nikos-CAE-2021}:
\begin{itemize}
    \item \textit{Randomized}: The sensor transmits at every slot $t$ with a fixed probability $p_\alpha\leq 1$. 
    \item \textit{Periodic}: The sensor transmits every $d_\textrm{th} \geq 1$ slots, and remains salient otherwise.
    \item \textit{Reactive}: A new transmission is triggered only on state changes, i.e., $X_t \neq X_{t-1}$.
    \item \textit{Error-triggered}: A new transmission is triggered whenever an error occurs, i.e., $X_t \neq \hat{X}_{t-1}$.
    \item \textit{Threshold}: A new transmission is triggered once the AoCE exceeds a threshold $\delta_\textrm{th}\geq 1$, irrespective of the instantaneous estimation error. The optimal threshold policy is global optimal when the source is symmetric and all states are equally important (see~Corellary~\ref{corollary:threshold-policy}).
\end{itemize}

The first three policies are open-loop policies since they do not require feedback links. The optimal values $p_\alpha^*$, $d^*_\textrm{th}$, and $\delta^*_\textrm{th}$ for the randomized, periodic, and threshold policies are obtained by brute force search. When testing policy performance, we take the average cost over $10^7$ samples. A comparison of these policies is summarized in Table~\ref{table:performance-comparison}. It is observed that when communication is cost-free ($\lambda = 0$), most policies adopt the always-transmit strategy. As $\lambda$ increases, the classic rule-, distortion- and age-based policies become increasingly inefficient. This is because they disregard data significance and transmit measurements in less critical errors. Specifically, they make decisions based on insufficient statistics. For example, the distortion policy relies on $(X_t, \hat{X_t})$, while the threshold policy utilizes only the AoCE $\Delta_t$.

The switching curve also depends on the source pattern. For instance, the optimal thresholds for $\lambda=3$ are obtained as
\begin{align}
    [\tau^*_{i,j}]_{4\times 4} = \begin{bmatrix}
        \infty & 2 & 2 & 3\\
        3 & \infty & \infty & \infty\\
        \infty & \infty & \infty & \infty\\
        1 & 1 & 1 & \infty
    \end{bmatrix}.\label{eq:thresholds}\notag
\end{align}
We observe that the thresholds for missed alarms (first row) are larger than those for normal errors (last row). This occurs because $Q_{4,4}$ is large, and the system is likely to remain in these normal errors for an extended period if no measurements are received. So it is beneficial to schedule frequently these normal errors, even though their age penalties are relatively small. In outline, the switching policy determines the optimal timing for data transmission based on the significance and dynamic evolution of estimation errors. \emph{This highlights the effectiveness of exploiting significance and timing aspects of information in such systems.}

\section{Conclusion}\label{sec:conclusion}
In this paper, we have investigated the semantics-aware remote estimation of prioritized Markov chains through the significance-aware AoCE. We first give sufficient conditions for the existence of an optimal policy. We prove that a switching policy is optimal and develop a structure-aware algorithm to find the optimal thresholds with reduced computation. Numerical comparisons show that the optimal policy can be much better than existing rule-, distortion- and age-based policies. The results in this paper generalize recent research on distortion and information aging.

\appendices
\section{Proof of Lemma~\ref{lemma:sufficient-statistic}}\label{proof:sufficient-statistic}
The AoCE $\Delta_t$ is a sufficient statistic if it satisfies~\cite{mahajan2016decentralized}: 
\begin{enumerate}
    \item the controlled Markov property
    \begin{align}
        \Pr[\Delta_{t+1}|I_t, A_t] = \Pr[\Delta_{t+1}|\Delta_t, A_t],
    \end{align}
    \item the absorbing property
        \begin{align}
           \Pr[c(I_t)|I_t, A_t] = \Pr[c(\Delta_t)|\Delta_t, A_t].
        \end{align}
\end{enumerate}

Recall that the AoCE~\eqref{eq:aoce} depends on the source pattern $Q$ and transmission decision $A_t$. Therefore, knowing the current age $\Delta_t$ is inadequate for inferring the next age $\Delta_{t+1}$. When the transition matrix $Q$ is of the special form described in \eqref{eq:symmetric-source}, the state change probabilities become independent of the current state $X_t$ and the target state $X_{t+1}$. Hence, it makes no difference which error the system encounters.  What matters is whether the system is in an erroneous state and how long it has stayed there. This information is fully captured by the age $\Delta_t$. From the system dynamics in \eqref{eq:prob-error-0}-\eqref{eq:prob-synced-states}, the transition probabilities of $\Delta_t$ are given by
\begin{align}
    \Pr[\delta^\prime|\delta, 0] &= \begin{cases}
        \bar{p}, &\text{if}~\delta = 0, \delta^\prime=0,\\
        1-\bar{p}, &\text{if}~\delta = 0, \delta^\prime=1,\\
        p, &\text{if}~\delta >0, \delta^\prime=0,\\
        \bar{p}, &\text{if}~\delta >0, \delta^\prime=\delta+1,\\
        1-p-\bar{p}, &\text{if}~\delta >0, \delta^\prime=1,
    \end{cases} \label{eq:age-symmetric-0} \\
    \Pr[\delta^\prime|\delta, 1] &= \begin{cases}
        \bar{p}, &\text{if}~\delta = 0, \delta^\prime=0,\\
        1-\bar{p}, &\text{if}~\delta = 0, \delta^\prime=1,\\
        \bar{p}p_s + pp_f, &\text{if}~\delta >0, \delta^\prime=0,\\
        \bar{p}p_f, &\text{if}~\delta >0, \delta^\prime=\delta+1,\\
        1-\bar{p} - pp_f, &\text{if}~\delta >0, \delta^\prime=1.
    \end{cases}\label{eq:age-symmetric-1}
\end{align}
Hence, the controlled Markov property follows. Further, when the source is non-prioritized, the cost function becomes $c(I_t) = Dg(\Delta_t)$, which satisfies the absorbing property.

\section{Proof of Lemma~\ref{lemma:recurrence}}\label{proof:recurrence}
Since the AoCE $\Delta_t$ is uniquely determined by the realization of $Z_t = (X_t, \hat{X}_t)$, it suffices to show that the subsystem $Z_t$, when controlled by the always-transmit policy, is positive recurrent. From \eqref{eq:prob-error-0}-\eqref{eq:prob-synced-states}, the state transition matrix of $Z_t$ under the always-transmit policy are obtained as
\begin{align}
    R = (R_{z,z^\prime}, z, z^\prime\in\set{Z}),
\end{align}
where $\set{Z} = \set{X}\times\set{X}$ and 
\begin{align}
    R_{z,z^\prime} &= \Pr[Z_{t+1}=z^\prime|Z_t = z, A_t = 1]\notag\\
    &=\begin{cases}
        Q_{i,k}p_s, &\text{if}~z=(i,j),z^\prime=(k,i),\\
        Q_{i,k}p_f, &\text{if}~z=(i,j),z^\prime=(k,j),\\
        Q_{i,k}, &\text{if}~z=(i,i),z^\prime=(k,i),\\
        0, &\text{otherwise.}
    \end{cases}
\end{align}

Given that $Z_t$ is a finite-state Markov chain, the positive recurrence property is verified if $R$ is irreducible\cite{gallager1997discrete}; that is, every pair of states $(z, z^\prime)$ communicates with each other. Let $z = (i,j)$ and $z^\prime=(m, n)$ be two distinct states in $\set{Z}$. Since $Q$ is irreducible, the source can jump from state $i$ to state $n$ in a finite number of steps. Suppose this transition takes $t_1$ steps and all transmission attempts fail. This probability is 
\begin{align}
    \Pr[(i,j)\rightarrow(n,j)] = Q_{i, n}^{t_1}(p_f)^{t_1}>0.\label{eq:prob-a}
\end{align}
Suppose that in the subsequent time slot, a new measurement is received w.p. $p_s$ and the estimate becomes $n$. After that, the source moves from state $n$ to state $m$ in $t_2$ steps while all transmission attempts fail. Thus, we have
\begin{align}
    \Pr[(n,j)\rightarrow(m,n)] = Q_{n, m}^{t_2+1}(p_f)^{t_2} p_s >0.\label{eq:prob-b}
\end{align}
From \eqref{eq:prob-a} and \eqref{eq:prob-b}, the probability of the system transitioning from state $z$ to $z^\prime$ along the above sample path is
\begin{align}
    \Pr[z\rightarrow z^\prime] = Q_{i, n}^{t_1}Q_{n, m}^{t_2+1}(p_f)^{t_1+t_2} p_s >0.
\end{align}

Similarly, we can show that there exists a sample path from $z^\prime$ to $z$ along which the probability $\Pr[z\leftarrow z^\prime]$ is positive. Therefore, the chain $Z_t$ under the always-transmit policy is irreducible and positive recurrent. It follows that the expected first passage time from every state $z$ to $z^\prime$ is finite. 

\section{Proof of Theorem~\ref{theorem:existence}}\label{proof:existence}
Sufficient conditions ensuring the existence of an optimal policy for average-cost MDPs with countably infinite state spaces are presented in~\cite[Corollary~7.5.10]{sennott1998stochastic}. We need to verify the following conditions: (C1) Given any positive constant $U$, the set $\set{S}_U\triangleq\{s\in\set{S}:l(s, a)\leq U~\textrm{for some}~ a\}$ is finite. (C2) There exists a distinguished state $z\in\set{S}$ and a $z$-standard policy (see Definition~\ref{def:standard policy}). Based on the cost function~\eqref{eq:total-cost}, C1 holds trivially.

\begin{definition}\label{def:standard policy}
    A stationary policy is called standard if the underlying system induced by the policy forms an irreducible and positive recurrent Markov chain. Further, if there exists a state $z$ such that the expected first passage time and the expected first passage cost from any state $s$ to $z$ are finite, the policy is called $z$-standard.
\end{definition}

To verify C2, let $\pi$ be the always-transmit policy. By Lemma~\ref{lemma:recurrence}, there exists a state $z\in\set{S}$, say $z = (0, 0, 0)$, such that $\pi$ is standard and the expected first passage time from any state $s$ to $z$ is finite. It remains to show that the expected first passage cost from any state $s$ to $z$ is also finite. Since the AoCE drops upon error variations, it suffices to ensure that the expected cost incurred in each estimation error $(i,j)$ is finite. To establish this, we first define a fictitious ``zero" state.

\begin{definition}\label{def:fictitious-zero}
    For each estimation error $(i, j), i\in\set{X}_{\rm{ap}}, i\neq j$, we define a fictitious ``zero" state that consists of all the synced states and other estimation errors with unit age, i.e.,
    \begin{align}
        (i, j, 0) \triangleq \set{S}_0 \cup\{(m, n, 1): \forall m\neq n, (m, n) \neq (i, j)\}.
    \end{align}
\end{definition}

\begin{figure}[ht]
    \centering
    \scalebox{0.88}{\begin{tikzpicture}[node distance=2.35cm, on grid, >=stealth, state/.style={circle, draw, minimum size=0.9cm, inner sep=0pt, font=\small, line width=0.3mm}]
    
    \node[state] (s1) {$(i, j, 0)$};
    \node[state, right=of s1] (s2) {$(i, j, 1)$};
    \node[state, right=of s2] (sD0) {$(i, j, 2)$};
    \node[state, right=of sD0] (sD1) {$(i, j, \delta)$};
    \node[right=2cm of sD1] (dots) {};

    \draw[->, line width=0.3mm] (s1) -- (s2);
    \draw[->, line width=0.3mm] (s2) -- (sD0) node[midway, above,yshift=-2pt] {\small $Q_{i,i}p_f$};
    \draw[->, dashed, line width=0.3mm] (sD0) -- (sD1) node[midway, above,yshift=-2pt] {\small $Q_{i,i}p_f$};
    \draw[->, line width=0.3mm] (sD1) -- (dots) node[midway, above,yshift=-2pt]{\small $Q_{i,i}p_f$};

    \draw[->, bend right=30, line width=0.3mm] (s2) to node[pos=0.3,  above right, sloped, yshift=-3pt] {} (s1);
    \draw[->, bend right=30, line width=0.3mm] (sD0) to node[pos=0.3,  above right, sloped, yshift=-3pt] {} (s1);
    \draw[->, bend right=30, line width=0.3mm] (sD1) to node[pos=0.3,  above right, sloped, yshift=-3pt] {\small $1 - Q_{i,i}p_f$} (s1);
    
    \draw[->, loop above, line width=0.3mm] (s1) to node[above, yshift=-2pt] {} (s1);
    
\end{tikzpicture}}
    \caption{The evolution of the AoCE on error $(i, j)$. Here, $Q_{i, i}p_f$ is the probability of remaining in this error after each transmission attempt, while $1-Q_{i, i}p_f$ is the probability of leaving this error.}
    \label{fig:theorem-1-proof}
\end{figure}

Fig.~\ref{fig:theorem-1-proof} depicts the evolution of the error holding time induced by the always-transmit policy. Assume the system is initialized in error $(i,j)$ with an age $\delta$. The probability of the system staying in error $(i, j)$ for $n$ consecutive time slots is
\begin{align}
    p_{i,j}(n) = (1-Q_{i,i}p_f)(Q_{i, i}p_f)^n. \label{eq:pn}
\end{align}
The first passage cost along this sample path is
\begin{align}
    M_{i,j}(n) = \sum_{t=0}^{n} 
    \big( 
      D_{i,j} g_{i,j}(\delta+t) + \lambda
    \big) + \tilde{D},\label{eq:Mn}
\end{align}
where $\tilde{D}$ is the cost incurred in the ``zero" state. Since $\tilde{D}$ is a finite constant, we omit it for brevity. 

From \eqref{eq:pn} and \eqref{eq:Mn}, the expected first passage cost is
\begin{align}
    \bar{M}_{i,j} &= \sum_{n = 0}^{\infty}p_{i,j}(n)M_{i,j}(n)\notag\\
    &= \sum_{n = 0}^{\infty}(1-Q_{i,i}p_f)(Q_{i, i}p_f)^n 
    \big( 
      D_{i,j} g_{i,j}(\delta+n) + \lambda \big)\notag\\
    &= \lambda + (1-Q_{i,i}p_f)D_{i,j} \sum_{n = 0}^{\infty}\xi_{i,j}(n),
\end{align}
where $\xi_{i,j}(n) = (Q_{i, i}p_f)^n g_{i,j}(\delta+n)$.

The proof is therefore reduced to showing that the summation $\sum_{n = 0}^{\infty}\xi_{i,j}(n)$ is finite. By the ratio test, the series $\{\xi_{i,j}(n)\}_{n\geq 0}$ converges if 
\begin{align}
    \lim_{n\rightarrow\infty}\frac{\xi_{i,j}(n+1)}{\xi_{i,j}(n)} < 1.
\end{align}
Rearranging the terms yields
\begin{align}
    \lim_{n\rightarrow\infty}\frac{g_{i,j}(\delta + n+1)}{g_{i,j}(\delta + n)} = \lim_{n^\prime \rightarrow\infty}\frac{g_{i,j}(n^\prime+1)}{g_{i,j}(n^\prime)}< \frac{1}{Q_{i,i}p_f}.
\end{align}
This concludes our proof.

\section{Proof of Theorem~\ref{theorem:structure}}\label{proof:structure}
Recall that transmitting in synced states has no effect on the estimation performance but incurs additional communication costs (see Eq.~\eqref{eq:prob-synced-states}). Thus, we have
\begin{align}
    \pi^*(s) = 0,~\textrm{if}~ i = j.\label{eq:case-0}
\end{align}
We are therefore only interested in erroneous states. The switching structure implies that the optimal policy $\pi^*(s)$ is monotonically non-decreasing in $\delta$ for any fixed estimation error $(i, j)$. By Theorem~\ref{theorem:existence}, the optimal policy satisfies $\pi^*(s) = \argmin_{a}f(s, a)$, where
\begin{align}
    f(s,a) = l(s, a) + \sum\nolimits_{s^\prime}P_{s,s^\prime}(a)h(s^\prime).\label{eq:f}
\end{align}

Establishing structural results requires the submodularity property of $f(s, a)$.

\begin{definition}\label{def:state-space-ordering}
    For each error $(i, j)$, where $i\in\set{X}_{\rm{ap}}$ and $i\neq j$, the states in the sub-space $\set{S}_{i, j}$ are ordered by the age $\delta$. That is, for any two states $s$ and $s^\prime$ in $\set{S}_{i, j}$, we define $s\leq s^\prime$ if their age satisfies $\delta \leq \delta^\prime$. 
\end{definition}

\begin{definition}\label{def:submularity} 
The function $f(s, a)$ is submodular on $\set{S}_{i, j}\times\set{A}$, if for all $s^\prime\geq s$ and $a^\prime\geq a$, it satisfies the inequality
\begin{align}
    f(s^\prime, a^\prime)+f(s, a) \leq f(s^\prime, a)+f(s, a^\prime).\label{eq:Q-factor-submodularity}
\end{align}
If the inequality is reversed, $f(s, a)$ is called supermodular.
\end{definition}

\begin{lemma}[{\hspace{-0.02em}\cite[Lemma~4.7.1]{puterman1994markov}}]\label{lemma:submodularity}
If $f(s, a)$ is submodular on $\set{S}_{i, j}\times\set{A}$, the optimal policy for error $(i, j)$, $\pi^*(s) = \argmin_{a\in\set{A}}f(s, a)$, is non-decreasing in $\delta$.
\end{lemma}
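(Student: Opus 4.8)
The plan is to exploit that the action space $\set{A}=\{0, 1\}$ is binary, so that monotonicity of $\pi^*(\cdot)$ in $\delta$ reduces to a single-crossing property of the action-value difference. First I would fix an error $(i,j)$ with $i\in\set{X}_{\textrm{ap}}$, $i\neq j$, and introduce on the sub-space $\set{S}_{i,j}$ the difference function $\phi(s)\triangleq f(s, 1) - f(s, 0)$. Under the convention that ties in $\argmin_{a}f(s,a)$ are broken toward transmission (the reverse convention, using strict inequality, is handled identically), the optimal policy satisfies $\pi^*(s)=1$ exactly on $\{s\in\set{S}_{i,j}:\phi(s)\leq 0\}$.

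The key step is to read off, from submodularity of $f$ on $\set{S}_{i,j}\times\set{A}$ (Definition~\ref{def:submularity}) applied with $a=0\leq 1=a^\prime$ and with any $s\leq s^\prime$ in the age ordering of Definition~\ref{def:state-space-ordering}, the inequality
\begin{align}
    f(s^\prime, 1) - f(s^\prime, 0) \leq f(s, 1) - f(s, 0),\notag
\end{align}
i.e.\ $\phi(s^\prime)\leq\phi(s)$ whenever $\delta^\prime\geq\delta$; thus $\phi$ is non-increasing along the age ordering. Then I would conclude: if $\pi^*(s)=1$ for $s=(i,j,\delta)$, then $\phi(s)\leq 0$, and for any $s^\prime=(i,j,\delta^\prime)$ with $\delta^\prime\geq\delta$ we obtain $\phi(s^\prime)\leq\phi(s)\leq 0$, hence $\pi^*(s^\prime)=1$; the remaining case $\pi^*(s)=0$ imposes no constraint. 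This is exactly the assertion that $\delta\mapsto\pi^*(i,j,\delta)$ is non-decreasing. Equivalently, with $\tau^*_{i,j}\triangleq\inf\{\delta\geq 1:\phi(i,j,\delta)\leq 0\}$ and $\inf\emptyset=\infty$, we get $\pi^*(i,j,\delta)=1\iff\delta\geq\tau^*_{i,j}$, which is the switching form invoked in Theorem~\ref{theorem:structure}.

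I do not expect a substantive obstacle: the statement is an immediate consequence of unpacking the definitions of submodularity and of the $\argmin$ selector, and the lemma is in fact quoted verbatim from \cite[Lemma~4.7.1]{puterman1994markov}. The only delicate points are bookkeeping --- verifying that the order on $\set{S}_{i,j}$ (by age) and the order $0\leq 1$ on $\set{A}$ are oriented so that the submodularity inequality makes $\phi$ non-increasing rather than non-decreasing, and fixing a tie-breaking rule so that the generally set-valued $\argmin$ yields a well-defined, non-decreasing selection. Accordingly, I would present the short verification above in place of reproducing Puterman's general argument.
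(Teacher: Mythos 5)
Your proposal is correct. Note, though, that the paper does not prove this lemma at all: it is quoted directly from \cite[Lemma~4.7.1]{puterman1994markov} and used as a black box inside the proof of Theorem~\ref{theorem:structure}, where the actual work is verifying the submodularity inequality for the specific $f$ in \eqref{eq:f-1}--\eqref{eq:f-2}. What you have written is the standard short argument underlying Puterman's result, specialized to the binary action set: submodularity with $a=0\leq 1=a^\prime$ says exactly that the action-value gap $\phi(s)=f(s,1)-f(s,0)$ is non-increasing along the age ordering of Definition~\ref{def:state-space-ordering}, so the set $\{\delta:\phi(i,j,\delta)\leq 0\}$ is an up-set and any consistent tie-breaking rule yields a non-decreasing selection from the (possibly set-valued) $\argmin$; your explicit handling of ties is in fact slightly more careful than the paper's statement, which implicitly assumes a single-valued selector. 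The only caveat is scope: this single-crossing argument is what the binary action space buys you, whereas Puterman's lemma is stated for general ordered action sets (via the monotonicity of largest/smallest minimizers), so your verification replaces the citation only for the case actually needed here, which is entirely adequate for the paper's purposes.
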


The proof is therefore reduced to verifying the inequality in \eqref{eq:Q-factor-submodularity}. Based on the system state transition probabilities derived in Section~\ref{sec:system-evolution}, the function $f(s, a)$ can be expressed as
\begin{align}
    f(s, 0) = c(s) &+ Q_{i, i} h(i, j, \delta + 1) + Q_{i, j} h(j, j, 0) \notag\\
    &+ \sum_{k\neq i,j}Q_{i, k} h(k, j, 1),\label{eq:f-1}\\
    f(s, 1) = c(s) & + \lambda 
    + Q_{i, i}p_s h(i, i, 0) 
    + \sum_{k_1\neq i}Q_{i, k_1}p_s h(k_1, i, 1)\notag\\
    &+ Q_{i,i}p_f h(i, j, \delta + 1)
     + Q_{i,j}p_f h(j, j, 0)\notag\\
    &+ \sum_{k_2\neq i,j}Q_{i, k_2}p_f h(k_2, j, 1).\label{eq:f-2}
\end{align}
Substituting \eqref{eq:f-1} and \eqref{eq:f-2} into \eqref{eq:Q-factor-submodularity} yields
\begin{align}
    f(s^\prime, 1) &+ f(s, 0) - f(s^\prime, 0) - f(s, 1) \notag\\
    &= - Q_{i,i}p_s \Big(
     h(i, j, \delta^\prime + 1) - h(i, j, \delta + 1)
    \Big).
\end{align}
We show in Theorem~\ref{theorem:monotone-h} that $h(i, j, \delta)$ is non-decreasing in $\delta$. We can thus conclude that $f(s, a)$ is submodular on $\set{S}_{i,j}\times\set{A}$. 

Therefore, for each error $(i,j)$, the optimal policy $\pi^*(s)$ is non-decreasing in $\delta$. That is, there exists a transmission threshold $\tau_{i,j}$, such that a transmission is initiated once the age $\delta$ exceeds this threshold. Formally, we write
\begin{align}
    \pi^*(s) = \begin{cases}
        1, &~\textrm{if}~\delta\geq \tau_{i,j}^*, i\neq j,\\
        0, &~\textrm{if}~ 1 \leq \delta < \tau_{i,j}^*, i\neq j.\\
    \end{cases}\label{eq:case-1}
\end{align}
Combining \eqref{eq:case-0} and \eqref{eq:case-1} yields the switching policy in \eqref{eq:switching-policy}.

\begin{theorem}\label{theorem:monotone-h}
    $h(s)$ is non-decreasing in $\delta$ for any fixed $(i, j)$.
\end{theorem}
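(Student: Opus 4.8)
The plan is to prove that the relative value function $h(s)$ is non-decreasing in the age component $\delta$ for any fixed estimation error $(i,j)$ with $i\in\set{X}_{\rm ap}$, $i\neq j$. The natural vehicle is the relative value iteration (or discounted value iteration) that underlies the proof of Theorem~\ref{theorem:existence}: one writes $h(s) = \lim_{n\to\infty} h_n(s)$ (up to a normalizing constant), where $h_0 \equiv 0$ and $h_{n+1}(s) = \min_a\{ l(s,a) + \sum_{s'} P_{s,s'}(a) h_n(s') \} - h_n(s_{\rm ref})$, and shows by induction on $n$ that each iterate $h_n$ is monotone non-decreasing in $\delta$ on every sub-chain $\set{S}_{i,j}$. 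Since the pointwise limit of monotone functions is monotone, the claim transfers to $h$. (If the existence proof instead uses the vanishing-discount approach, I would run the same induction on the discounted value functions $V_\beta$ and then pass to the limit $\beta\to 1$.)

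The inductive step is where the work lies. Assume $h_n(i,j,\delta)$ is non-decreasing in $\delta$ for each $(i,j)$; I must show the same for $h_{n+1}$. Fix $(i,j)$ and take $\delta' > \delta \geq 1$. Using the expressions \eqref{eq:f-1}–\eqref{eq:f-2} for $f_n(s,a) := l(s,a)+\sum_{s'}P_{s,s'}(a)h_n(s')$, observe that when we compare states $s=(i,j,\delta)$ and $s'=(i,j,\delta')$, the per-stage cost $c(s) = D_{i,j}g_{i,j}(\delta)$ is non-decreasing in $\delta$ because $g_{i,j}$ is non-decreasing by assumption, and \emph{every} term in $f_n(s,a)$ that is not $c(s)$ is either independent of $\delta$ (the terms $h_n(j,j,0)$, $h_n(i,i,0)$, $h_n(k,i,1)$, $h_n(k,j,1)$, and the cost $\lambda$) or is of the form $h_n(i,j,\delta+1)$, which is non-decreasing in $\delta$ by the induction hypothesis. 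Hence $f_n(s,a)$ is non-decreasing in $\delta$ for each fixed $a\in\{0,1\}$, and therefore so is $h_{n+1}(i,j,\delta) = \min_a f_n((i,j,\delta),a) - h_n(s_{\rm ref})$, since the minimum of two non-decreasing functions is non-decreasing. This closes the induction; one also notes the base case $h_0\equiv 0$ is trivially monotone, and that the normalizing subtraction of $h_n(s_{\rm ref})$ is a constant that does not affect monotonicity in $\delta$.

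A subtlety worth addressing explicitly is the boundary/interface behavior: for $\delta=1$ the ``predecessor'' direction connects $(i,j,1)$ to the fictitious zero state and the synced states, but monotonicity is only asserted \emph{within} $\set{S}_{i,j}$ for $\delta\geq 1$, so no cross-component comparison is needed; the recursion for $(i,j,\delta)$ only ever references $(i,j,\delta+1)$ among the age-dependent terms, which keeps the induction self-contained along each sub-chain. I would also remark that in the truncated MDP the same argument applies verbatim, with $h_n(i,j,N+1)$ replaced by $h_n(i,j,N)$, preserving monotonicity. The main obstacle is simply being careful that the value-iteration iterates are well-defined and converge in the unbounded-cost setting — but this is exactly what Assumption~\ref{assumption:age-functions} and Theorem~\ref{theorem:existence} (via \cite[Corollary~7.5.10]{sennott1998stochastic}) guarantee, so I would invoke that machinery rather than re-derive it, and the monotonicity proof itself is then a short structural induction.
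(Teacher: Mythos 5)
Your proposal is correct and follows essentially the same route as the paper: relative value iteration with an induction showing each iterate $f^{n}(s,a)$ is non-decreasing in $\delta$ for fixed $a$ (the only age-dependent successor term being $h^{n}(i,j,\delta+1)$), then passing the monotonicity through the min over actions and the limit $n\to\infty$. If anything, your explicit use of the monotonicity of $g_{i,j}$ to handle the per-stage cost term is slightly more careful than the paper's displayed chain, which carries $c(s)$ through both sides.
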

\begin{proof}
    Theorem~\ref{theorem:existence} suggests that we can use the \emph{relative value iteration} (RVI)~\cite{puterman1994markov} to solve the Bellman equation~\eqref{eq:bellman-equation}. For each iteration $n\geq 1$, the RVI updates the value function using the following recursion:
\begin{align}
    f^{n}(s, a) &= l(s, a) + \sum\nolimits_{s^\prime}P_{s,s^\prime}(a) h^{n-1}(s^\prime),\label{eq:rvi-a}\\
    \tilde{h}^{n}(s) &= \min_{a} f^{n}(s, a),\label{eq:rvi-b}\\
    h^{n}(s) &= \tilde{h}^{n}(s) - \tilde{h}^{n}(s_\textrm{ref}),\label{eq:rvi-c}
\end{align}
where $s_\textrm{ref}\in \set{S}$ is an arbitrary reference state. The sequences $\{\tilde{h}^n(s)\}$ and $\{h^n(s)\}$ converge as $n\rightarrow\infty$\cite{puterman1994markov}. Moreover, $\lambda^* = \tilde{h}(s_\textrm{ref})$ and $h(s) = \tilde{h}(s) - \tilde{h}(s_\textrm{ref})$ are a solution to \eqref{eq:bellman-equation}. 

We prove the monotonicity of $h(s)$ by induction. Choose $h^0(i, j, \delta)$ to be non-decreasing in $\delta$, i.e.,
\begin{align}
    h^0(i,j, \delta) \leq h^0(i,j, \delta^\prime),~ \forall \delta \leq \delta^\prime.
\end{align}
Assume that $h^{n}(i,j, \delta)$ is non-decreasing in $\delta$. We need to show that $h^{n+1}(i,j, \delta)$ is still a non-decreasing function. By~\eqref{eq:rvi-a}-\eqref{eq:rvi-c}, we can write
\begin{align}
    h^{n+1}(s) = \min_{a}f^{n+1}(s,a) - \min_{a}f^{n+1}(s_\textrm{ref},a).
\end{align}
By \eqref{eq:f-1}-\eqref{eq:f-2} and the induction hypothesis, we have
\begin{align}
    f^{n+1}(s, 0) &= c(s) + Q_{i, i} h^n(i, j, \delta + 1) + Q_{i, j} h^n(j, j, 0) \notag\\
    &\qquad \qquad \qquad + \sum_{k\neq i,j}Q_{i, k} h^n(k, j, 1)\notag\\
    &\leq c(s) + Q_{i, i} h^n(i, j, \delta^\prime + 1) + Q_{i, j} h^n(j, j, 0) \notag\\
    &\qquad \qquad \qquad + \sum_{k\neq i,j}Q_{i, k} h^n(k, j, 1)\notag\\
    &= f^{n+1}(s^\prime, 0),\\
    f^{n+1}(s, 1) &= c(s) +\lambda 
    + Q_{i, i}p_s h^n(i, i, 0) \notag\\
    &+ \sum_{k_1\neq i}Q_{i, k_1}p_s h^n(k_1, i, 1)
    + Q_{i,i}p_f h^n(i, j, \delta + 1)\notag\\
    &+ Q_{i,j}p_f h^n(j, j, 0)
    + \sum_{k_2\neq i,j}Q_{i, k_2}p_f h^n(k_2, j, 1)\notag\\
    &\leq c(s) +\lambda 
    + Q_{i, i}p_s h^n(i, i, 0) \notag\\
    &+ \sum_{k_1\neq i}Q_{i, k_1}p_s h^n(k_1, i, 1)
    + Q_{i,i}p_f h^n(i, j, \delta^\prime + 1)\notag\\
    & + Q_{i,j}p_f h^n(j, j, 0)
    + \sum_{k_2\neq i,j}Q_{i, k_2}p_f h^n(k_2, j, 1)\notag\\
    &= f^{n+1}(s^\prime, 1).
\end{align}
It gives that
\begin{align}
    h^{n+1}(s) &= \min \hspace{-0.2em} \big\{
        f^{n+1}(s, 0), f^{n+1}(s, 1) 
    \big\} - \min_{a}f^{n+1}(s_\textrm{ref},a)\notag\\
    &\leq \min  \hspace{-0.2em} \big\{
        f^{n+1}(s^\prime, 0), f^{n+1}(s^\prime, 1) 
    \big\} \hspace{-0.2em} - \hspace{-0.2em} \min_{a}f^{n+1}(s_\textrm{ref},a)\notag\\
    &= \min_{a}f^{n+1}(s^\prime,a) - \min_{a}f^{n+1}(s_\textrm{ref},a)\notag\\
    & = h^{n+1}(s^\prime).
\end{align}
Therefore, $h^{n+1}(s)$ is a non-decreasing function. Since the monotonicity property is preserved as $n\rightarrow \infty$, we can conclude that $h(s)$ is non-decreasing in $\delta$ for any fixed $(i,j)$. This completes the inductive proof.
\end{proof}

\section{Proof of Corollary~\ref{corollary:threshold-policy}}\label{proof:threshold-policy}
Let Assumption~\ref{assumption:source} hold. By Lemma~\ref{lemma:sufficient-statistic}, the AoCE $\Delta_t$ is a sufficient statistic for the problem. Therefore, we need to show that the optimal policy to the MDP with system state $\Delta_t$ and transition probabilities \eqref{eq:age-symmetric-0}-\eqref{eq:age-symmetric-1} is a threshold policy. Similar to Theorem~\ref{theorem:structure}, the proof is reduced to showing that the optimal policy is non-decreasing in $\delta$. Thus, we will verify the submodularity of the function $f(s, a)$ defined in \eqref{eq:f}.

Based on the state transition probabilities \eqref{eq:age-symmetric-0}-\eqref{eq:age-symmetric-1}, $f(s,a)$ can be written as
\begin{align}
    f(s, 0) &= Dg(\delta) + p h(0) + \bar{p} h(\delta + 1) + (1 - p -\bar{p}) h(1),\label{eq:delta-0}\\
    f(s, 1) &= Dg(\delta) + \lambda + (\bar{p}p_s + p p_f) h(0) + \bar{p}p_f h(\delta + 1) \notag\\
    &\qquad \qquad + (1 - \bar{p} - pp_f) h(1), \label{eq:delta-1}
\end{align}
where $p$ and $\bar{p}$ are defined in \eqref{eq:symmetric-source}. Substituting \eqref{eq:delta-0} and \eqref{eq:delta-1} into the submodularity condition \eqref{eq:Q-factor-submodularity} yields
\begin{align}
    f(s^\prime, 1) + f(s, 0) &- f(s^\prime, 0) - f(s, 1) \notag\\
    &= - \bar{p} p_s \big(
     h(\delta^\prime + 1) - h(\delta + 1)
    \big).
\end{align}
We can show by induction that $h(\delta)$ is a non-decreasing function. It follows that $f(s, a)$ is submodular.

\section{Proof of Theorem~\ref{therem:asymptotical-optimality}}\label{proof:asymptotical-optimality}
We prove the result by two steps:
\begin{enumerate}
    \item For any given switching policy $\pi$, show that the cost of the truncated problem converges to the original one as $N\rightarrow\infty$, i.e., $\lim_{N\rightarrow\infty}\set{L}(\pi, N)\rightarrow\set{L}(\pi)$.
    \item Use the above result and the squeeze theorem to show that $\lim_{N\rightarrow\infty}\set{L}^*(N)\rightarrow\set{L}^*$.
\end{enumerate}

{\it{Step 1:}} Recall that the optimal policy for the truncated MDP is a switching type as well. Let $\pi$ denote a switching policy with all threshold values less than $N$, i.e., $\tau_{i,j} < N, \forall i\neq j$. Let $\mu$ and $\mu^\prime$ denote the stationary distribution of the original and truncated systems induced by policy $\pi$, respectively. Denote $\set{S}_{i, j}^\textrm{in}$ and $\set{S}_{i, j}^\textrm{out}$ as the sets of inner and outer states associated with error $(i, j)$, respectively, where
\begin{align}
    \set{S}_{i, j}^\textrm{in} &\triangleq \{(i,j,\delta) \in \set{S}_{i,j}: 1\leq \delta<N\},\\
    \set{S}_{i, j}^\textrm{out} &\triangleq \{(i,j,\delta) \in \set{S}_{i,j}: \delta \geq N\}.
\end{align}
Let $s_{i,j}^\textrm{b} = (i, j, N)$ denote the boundary state of error $(i, j)$.

Since the state space truncation does not affect the stationary probabilities of inner states, we have
\begin{align}
    \mu(s) = \mu^\prime(s), ~~\forall s \in \set{S}_{i, j}^\textrm{in}.
\end{align}
The boundary state $s_{i,j}^\textrm{b}$ absorbs the effect of all discarded states. Thus, we can write
\begin{align}
\mu^\prime(s_{i,j}^\textrm{b})=\sum\limits_{s\in\set{S}_{i,j}^\textrm{out}}\mu(s) =\sum_{k=0}^\infty \mu(s_{i,j}^\textrm{b}) (Q_{i, i}p_f)^k = \frac{\mu(s_{i,j}^\textrm{b})}{1-Q_{i, i}p_f}.
\end{align}
For error $(i, j)$, the performance gap can be computed as
\begin{align}
    \sigma_{i,j}(\pi, N) &= \set{L}_{i,j}(\pi) - \set{L}_{i,j}(\pi, N) \notag\\
    &= \sum_{s\in \set{S}_{i,j}^\textrm{out}}\mu(s) l(s, 1) - \mu^\prime(s_{i,j}^\textrm{b}) l(s_{i,j}^\textrm{b}, 1)\notag\\
    &= \sum_{s\in \set{S}_{i,j}^\textrm{out}}\mu(s) c(s) - \mu^\prime(s_{i,j}^\textrm{b}) c(s_{i,j}^\textrm{b})\notag\\
    &=\mu(s_{i,j}^\textrm{b})D_{i,j}\sum_{k=0}^\infty \epsilon_{i,j}(k),\label{eq:performance-gap}
\end{align}
where $\epsilon_{i,j}(k) \triangleq (Q_{i, i}p_f)^{k} \big(g_{i,j}(k+N) - g_{i,j}(N)\big)$.

By the ratio test, the summation $\sum_{k=0}^\infty \epsilon_{i,j}(k)$ is finite if
\begin{align}
    \lim_{k\rightarrow\infty}\frac{\epsilon_{i,j}(k+1)}{\epsilon_{i,j}(k)} = Q_{i, i}p_f\lim_{k\rightarrow\infty}\frac{g_{i,j}(k+N+1)}{g_{i,j}(k+N))} <1.
\end{align}
Therefore, if Assumption~\ref{assumption:age-functions} holds, the summation of the series $\{\epsilon_{i,j}(k)\}_{k\geq 0}$ is finite. Let $B_\textrm{max}$ be a finite constant such that $\sum_{k=0}^\infty \epsilon_{i,j}(k) \leq B_\textrm{max}$. It follows that
\begin{align}
    \sigma_{i,j}(\pi, N) \leq  \mu(s_{i,j}^\textrm{b})D_{i,j} B_\textrm{max}.\label{eq:performance-bound}
\end{align}
The stationary distribution of the boundary state $s_{i,j}^\textrm{b}$ is
\begin{align}
     \mu(s_{i,j}^\textrm{b}) = \mu(i, j, 1)(Q_{i, i})^{\tau_{i,j}-1}(Q_{i, i}p_f)^{N-\tau_{i, j}}.\label{eq:mu-b}
\end{align}
Substituting \eqref{eq:mu-b} into \eqref{eq:performance-bound} and taking a limit as $N\rightarrow \infty$ yields
\begin{align}
    0 < \lim_{N\rightarrow\infty}\sigma_{i,j}(\pi, N) \leq  D_{i,j} B_\textrm{max} \lim_{N\rightarrow\infty}\mu(s_{i,j}^\textrm{b})\rightarrow  0.
\end{align}
Then, the overall performance gap satisfies
\begin{align}
    \lim_{N\rightarrow\infty} \sigma(\pi, N)=\lim_{N\rightarrow\infty} \sum_{i\neq j} \sigma_{i,j}(\pi, N)\rightarrow 0.
\end{align}
Therefore, for any given switching policy $\pi$, the average costs satisfy the inequality $\lim_{N\rightarrow\infty}\set{L}(\pi, N) \rightarrow \set{L}(\pi)$. 

{\it{Step 2:}} Next, we rely on the squeeze theorem to prove the result~\cite{luo2024MobiHoc}. Let $\pi^*$ and $\pi^*(N)$ be the optimal policies for the original and truncated MDPs, respectively. Suppose $N$ is chosen such that the optimal thresholds are smaller than $N$. From step~1 we have $\set{L}(\pi) = \set{L}(\pi, N) + \sigma(\pi, N)$ for any switching policy $\pi$. It follows that
\begin{align}
    \set{L}(\pi^*(&N),N)
    + \sigma(\pi^*,N) \notag\\
    &\overset{(a)}{\leq} \set{L}(\pi^*,N)+ \sigma(\pi^*,N) \notag= 
    \set{L}(\pi^*) \notag\\
    &\overset{(b)}{\leq}  \set{L}(\pi^*(N)) = \set{L}(\pi^*(N),N)+ \sigma(\pi^*(N),N).
\end{align}
Herein, $(a)$ holds because $\pi^*(N)$ solves the truncated MDP, $(b)$ holds because $\pi^*$ solves the original MDP. Using the facts that $\set{L}(\pi^*(N),N) = \set{L}^*(N)$ and $\set{L}(\pi^*) = \set{L}^*$, we obtain
\begin{align}
    \set{L}^*(N)
    + \sigma(\pi^*,N) \leq
    \set{L}^*
    \leq \set{L}^*(N)+ \sigma(\pi^*(N),N).
\end{align}
Rearranging the terms yields
\begin{align}
    \set{L}^* - \sigma(\pi^*(N),N)
    \leq \set{L}^*(N) 
    \leq \set{L}^* - \sigma(\pi^*,N).
\end{align}

Taking a limit as $N \rightarrow \infty$ and using the fact that $\lim_{N\rightarrow\infty}\sigma(\pi,N) \rightarrow 0$, we obtain 
\begin{align}
     \lim_{N\rightarrow\infty} \set{L}^*(N) \rightarrow \set{L}^*.
\end{align}
This completes the proof.

\bibliographystyle{IEEEtran}
\bibliography{ref}

\begin{IEEEbiographynophoto}{Jiping~Luo} (Graduate Student Member, IEEE) received the B.S. degree in electronics and information engineering from Dalian Maritime University, Dalian, China, in 2020, and the M.S. degree in information and communication engineering from Harbin Institute of Technology, Shenzhen, China, in 2023. He is currently pursuing the Ph.D. degree with the Department of Computer and Information Science, Link\"oping University, Sweden. His research interests include semantic communications, networked control systems, and stochastic optimization.
\end{IEEEbiographynophoto}

\begin{IEEEbiographynophoto}{Nikolaos~Pappas} (Senior Member, IEEE) received the first B.Sc. degree in computer science, the second B.Sc. degree in mathematics, the M.Sc. degree in computer science, and the Ph.D. degree in computer science from the University of Crete, Greece, in 2005, 2012, 2007, and 2012, respectively. From 2005 to 2012, he was a Graduate Research Assistant with the Telecommunications and Networks Laboratory, Institute of Computer Science, Foundation for Research and Technology-Hellas, Heraklion, Greece; and a Visiting Scholar with the Institute of Systems Research, University of Maryland at College Park, College Park, MD, USA. From 2012 to 2014, he was a Post-Doctoral Researcher with the Department of Telecommunications, CentraleSupec, France. He is currently an Associate Professor with the Department of Computer and Information Science, Link\"oping University, Link\"oping, Sweden. His main research interests include the field of wireless communication networks, with an emphasis on semantics-aware communications, energy harvesting networks, network-level cooperation, age of information, and stochastic geometry. He has served as the Symposium Co-Chair for the IEEE International Conference on Communications in 2022. He is the general chair for the 23rd International Symposium on Modeling and Optimization in Mobile, Ad hoc, and Wireless Networks (WiOpt 2025). He is an Area Editor of the \textsc{IEEE Open Journal of the Communications Society} and an Expert Editor of invited papers of the \textsc{IEEE Communications Letters}. He is Associate Editor for four IEEE Transactions journals.
\end{IEEEbiographynophoto}

\end{document}